\Crefname{figure}{Figure}{Figures}
\scriptsize\color{black!40},
\slshape\color{green!40!black},
\slshape\color{black!40},
\newcommand{\rw}[1]{\todo[inline,color=cyan!60]{\sf RW: #1}}
\newcommand{\nop}[1]{}
\newcommand{\cd}{\operatorname{:-}}
\newcommand{\set}[1]{\{#1\}}
\newcommand{\setof}[2]{\{#1 \mid #2\}}
\newcommand{\YA}{\textsf{YA}\xspace}
\newcommand{\HJ}{\textsf{HJ}\xspace}
\newcommand{\TTJ}{\textsf{TTJ}\xspace}
\newcommand{\TTJNG}{\textsf{TTJ}^{ng}\xspace}
\newcommand{\TTJNGDP}{\textsf{TTJ}^{ng+dp}\xspace}
\newcommand{\TTJDP}{\textsf{TTJ}^{dp}\xspace}
\newcommand{\TTJLinear}{\textsf{TTJ}^L\xspace}
\newcommand{\ttj}{\texttt{ttj}}
\newcommand{\IN}{\textsf{IN}}
\newcommand{\OUT}{\textsf{OUT}}
\newcommand{\keys}{\textsf{keys}}
\newcommand{\imdbcastinfo}{\mathsf{cast\_info}}
\newcommand{\tpchsupplier}{\mathsf{supplier}}
\newcommand{\tpchnation}{\mathsf{nation}}
\newcommand{\tpchlineitem}{\mathsf{lineitem}}
\newcommand{\tpchorders}{\mathsf{orders}}
\newcommand{\tout}{t_{\text{out}}}
\newcommand{\concat}{\mathbin{+\mkern-5mu+}}
\newcommand{\tconv}{\mathscr{T}}
\title{ TreeTracker Join: Simple, Optimal, Fast}
\author{Zeyuan Hu}
\affiliation{%
    \institution{University of Texas at Austin}
    \city{Austin}
    \state{Texas}
    \country{USA}
}
\author{Remy Wang}
\affiliation{%
    \institution{University of California, Los Angeles}
    \city{Los Angeles}
    \state{California}
    \country{USA}
}
\author{Daniel P. Miranker}
\affiliation{%
    \institution{University of Texas at Austin}
    \city{Austin}
    \state{Texas}
    \country{USA}
}
\begin{abstract}
We present a novel linear-time acyclic join algorithm,
TreeTracker Join (\TTJ). 
The algorithm can be understood as 
the pipelined binary hash join with a simple twist:
upon a hash lookup failure, \TTJ resets execution to the binding of the tuple 
causing the failure, and removes the offending tuple from its relation.
Compared to the best known linear-time acyclic join algorithm, 
Yannakakis's algorithm,
\TTJ shares the same asymptotic complexity
while imposing lower overhead. 
Further, we
prove that when measuring query performance by counting the number of hash
probes, \TTJ will match or outperform binary hash join on the same plan.
This property holds independently of the plan and independently of acyclicity. We
are able to extend our theoretical results to cyclic queries by introducing a new hypergraph
decomposition method called tree convolution.  Tree convolution iteratively identifies
and contracts acyclic subgraphs of the query hypergraph. The method avoids
redundant calculations associated with tree decomposition and may be of
independent interest. Empirical results on TPC-H, the Join Order Benchmark, and
the Star Schema Benchmark demonstrate favorable results.
\end{abstract}
\begin{document}

\maketitle

\section{Introduction}

Yannakakis~\cite{DBLP:conf/vldb/Yannakakis81} was the first to
describe a linear-time join algorithm (hereafter \YA) running in time $O(|\IN| + |\OUT|)$, 
where $|\IN|$ is the input size and $|\OUT|$ is the output size.
In principle, this is the best asymptotic complexity one can hope for,
because in most cases the algorithm must read the entire input and write the entire output.
However, virtually no modern database systems implement \YA.
A major factor is its high overhead. Prior to executing the join, \YA performs two passes over the input relations, using semijoins to reduce the input size. The reduction is lossless and enables optimally joining the reduced relations.
Since the cost of a semijoin is proportional to the size of its arguments, 
this immediately incurs a $2\times$ overhead in the input size.
An improved version of \YA~\cite{Bagan2007OnAC} achieves the same result in one semijoin pass,
but the overhead of this pass remains.
Another practical challenge is that \YA is ``too different'' from traditional binary join algorithms, 
making it difficult to integrate into existing systems.
For example, the efficiency of \YA critically depends on a query's {\em join tree}
which is different 
from the query plan used by binary joins%
\footnote{By {\em join tree} we mean (hyper-)tree decomposition of hypertree width 1,
not the tree of binary join operators commonly seen in relational algebra query plans.}. 
Where there is a wealth of techniques to optimize
query plans for binary joins, 
little is known about cost-based optimization of join trees for \YA.

\begin{figure*}[htbp]
    \centering
    \begin{minipage}[t]{0.35\textwidth}  
        \centering
        \begin{subfigure}[t]{\textwidth}
            \centering
\begin{lstlisting}[escapechar=|,showlines=true]
for i,x in R: 
  for y,j in S[x]:
    for k in T[y]: |\label{lst:k-loop}|
      # lines left blank
      # intentionally 
      for l in U[y]: |\label{lst:u-lookup}|
        print(x,y,i,j,k,l)
\end{lstlisting}
            \caption{Binary join}
            \label{fig:binary-exe}
        \end{subfigure}
    \end{minipage}%
    \hspace{-3.5em}
    \begin{minipage}[t]{0.3\textwidth}  
        \centering
        \begin{subfigure}[t]{\textwidth}
            \centering
\begin{lstlisting}[showlines=true,escapeinside={(*}{*)},numbers=none,basicstyle=\ttfamily\small\color{black!30}]
for i,x in R: 
  for y,j in S[x]:
    for k in T[y]:
      (*\color{black}\underline{\textbf{if} U[y] \textbf{is} None:}*)
        (*\color{black}\underline{\textbf{break}}*)
      for l in U[y]:
        print(x,y,i,j,k,l)
\end{lstlisting}
            \caption{Backjumping}
            \label{fig:backjump}
        \end{subfigure}
    \end{minipage}%
    \hspace{-1.5em}
    \begin{minipage}[t]{0.35\textwidth}  
        \centering
        \begin{subfigure}[t]{\textwidth}
            \centering
\begin{lstlisting}[showlines=true,escapeinside={(*}{*)},numbers=right,basicstyle=\ttfamily\small\color{black!30}]
for i,x in R: 
  for y,j in S[x]:
    for k in T[y]:
      if U[y] is None:
        (*\color{black}\underline{S[x].del((y,j));}*) break
      for l in U[y]:
        print(x,y,i,j,k,l)
\end{lstlisting}
            \caption{Tuple deletion}
            \label{fig:tuple-delete}
        \end{subfigure}
    \end{minipage}
\caption{Instantiation of binary hash join on
example~\ref{ex:main},
with backjumping,
and with tuple deletion.}
\Description{The figure shows the execution of a binary hash join algorithm, 
with backjumping, and with tuple deletion.}
\label{fig:main-idea}
    \label{fig:main}
\end{figure*}


In this paper, we propose a new linear-time join algorithm called
TreeTracker Join (\TTJ), inspired by the TreeTracker algorithm~\cite{DBLP:journals/ai/BayardoM94}
in Constraint Satisfaction. \TTJ can be understood as the traditional
binary hash join with a twist: when a hash lookup fails,
backtrack to the tuple causing the failure, and remove that tuple
from its relation.
The backtracking points depend only on the query, not the data, and are determined by the query compiler prior to query execution.
The execution deviates from binary hash join only when a dangling tuple is detected and deleted. Thus the tuple is excluded from any computation going forward. Hence, when using identical query plans \TTJ 
is guaranteed to match or outperform binary hash join (Section~\ref{sec:comparison}).


The following example illustrates the main ideas of \TTJ.
\begin{example}\label{ex:main}
Consider the natural join of the relations $R(i, x)$,
 $S(x, y, j)$, $T(y, k)$, and $U(y, l)$,
where we use $R(i, x)$ to denote that the schema of $R$ is $\{i, x\}$. The set $\set{1, \ldots, N}$ is denoted by $[N]$.
Let the relations be defined as follows:
$$
R = \setof{(i, 1)}{i \in [N]}\hspace{1em} S = \setof{(1, 1, j)}{j \in [N]}\hspace{1em} %
T = \setof{(1, k)}{k \in [N]}\hspace{1em} U = \setof{(0, l)}{l \in [N]}
$$
Observe $U$ 
shares no common $y$-values with $S$ or $T$, making the query result empty.
We'll first consider execution with binary hash join, the foundation of our algorithm.
Assume the optimizer produces a left-deep join plan $((R \bowtie S) \bowtie T) \bowtie U$. 
The execution engine builds hash tables for $S$, $T$, and $U$,
mapping each $x$ to $(y,j)$ values in $S$, $y$ to $k$ values in $T$, and $y$ to $l$ values in $U$.
Figure~\ref{fig:binary-exe}%
\footnote{One may also recognize this as indexed nested loop join, 
which is equivalent~\cite{sqliteoptoverview}.} illustrates the basis of the execution.
For each $(i, x)$ tuple in $R$, the hash table for $S$ is probed for the $(y,j)$ values.
T is probed with each pair $(y,j)$ to determine  the $k$ values. This repeats for  $U$.
Although the query produces no output, the execution takes $\Omega(N^3)$ time 
because it first computes the join of $R$, $S$, and $T$.
A closer look at the execution reveals the culprit:
when the lookup on $U$ produces no result, (line~\ref{lst:u-lookup}),
the algorithm continues to the next iteration of the loop over $k$ values, (line~\ref{lst:k-loop}). The same value of $y$ is used to probe into $U$ again!
To address this, \textbf{the first key idea of \TTJ is to backjump%
\footnote{Backjumping is a concept in backtracking search algorithms; 
we use the term informally to mean the interruption of 
a nested loop iteration to jump back to an outer loop, 
while referring to the original TreeTracker algorithm~\cite{DBLP:journals/ai/BayardoM94}
for a precise definition.}
to the level causing
the probe failure.} 
For clarity
we use \lstinline|break| to represent the backjump,
as shown in Figure~\ref{fig:backjump}.
When probing $U$ with the key value $y $ fails to return a result, 
we break out of the current loop over $k$
and continue to the next iteration
of the second loop level. That is because the unsuccessful lookup key value of $y$ is assigned at that level. That next iteration retrieves 
new $y,j$ values,
skipping over iterations for $k$ values
that are doomed to fail.
With this optimization, the execution finishes in $O(N^2)$ time,
as it still needs to compute the join of $R$ and $S$.
To improve the performance further:
\textbf{the second key idea of \TTJ is to delete the tuple causing the probe failure.}
This is shown in Figure~\ref{fig:tuple-delete}:
after the probe failure, the offending tuple $(x, y, j)$ is removed from the $S$ hash table.
This is safe to do, because that $y$ value will always fail to join with any tuple in $U$.
In this way, all tuples from $S$ are removed after looping over it the first time.
Then, on all subsequent iterations of the loop over $R$, the probe into $S$ 
fails immediately.
Overall, the algorithm finishes in $O(N)$ time.
\end{example}
%
In general, \TTJ runs in linear time in the size of the input and output
for full acyclic joins. But the algorithm is not limited to acyclic queries:
given the same query plan for {\em any} query, cyclic or acyclic, \TTJ is guaranteed to match the performance of binary hash join,
when measuring query performance by counting the number of hash probes.
In particular, when no probe fails \TTJ behaves identically to binary join.
This is in contrast to \YA which always carries the overhead of semijoin reduction,
even if the reduction does not remove any tuple.

To address cyclic queries further we introduce a new method to break down a cyclic query into acyclic parts called {\em tree convolution}.  
We use this method to analyze the run time of \TTJ. 
A special kind of tree convolution, called {\em rooted convolution},
eliminates materialization of intermediates during query processing.

In summary, our contributions include:
\begin{itemize}
    \item Propose \TTJ, a new join algorithm that runs in time $O(|\IN| + |\OUT|)$ on full acyclic queries.
    \item Prove that \TTJ matches or outperforms binary join given the same query plan, on both acyclic and cyclic queries.
    \item Introduce {\em tree convolution}, a new method to break down cyclic queries into acyclic parts,
    and use it to analyze the run time of \TTJ on cyclic queries.
    \item Improve the performance of \TTJ with further optimizations.
    \item Conduct experiments to evaluate the efficiency of \TTJ on acyclic queries.
\end{itemize}

\section{Related Work}
The observation that only one semijoin pass is necessary in \YA 
has been a folklore in the database community,
with an early appearance in the theoretical work of 
Bagan, Durand, and Grandjean~\cite{DBLP:conf/csl/BaganDG07}.
Their paper studies the problem of enumerating conjunctive query results
with constant delay, but without considering practical efficiency.
Recent systems implementing such enumeration algorithms
take advantage of the same insight~\cite{DBLP:journals/corr/abs-2205-05649, DBLP:journals/sigmod/TziavelisGR24}.
Compared to this improved version of \YA, \TTJ has the guarantee of
matching the performance of \HJ given any query plan,
and is often faster in practice, as we will show in Section~\ref{sec:experiments}.

Researchers have also explored ways to integrate elements 
of \YA into existing systems.
Zhu et.al.~\cite{DBLP:journals/pvldb/ZhuPSP17}
propose {\em lookahead information passing},
using bloom filters to implement semijoins over star schemas.
Birler, Kemper, and Neumann~\cite{DBLP:journals/pvldb/BirlerKN24}
decompose every join operator into a {\em lookup} and 
an {\em expand}, and prove that certain lookup-and-expand (L\&E) plans
are guaranteed to run in linear time for acyclic queries.
Bekkers et.al.~\cite{DBLP:journals/corr/abs-2411-04042}
implement L\&E plans in a vectorized query engine,
while proving that their approach is guaranteed to match 
binary hash join for a class of {\em well-behaved} query plans.
The theoretical guarantees of \TTJ is complementary to these approaches:
while \TTJ guarantees to match binary hash join for left-deep plans,
the well-behaved class defined by Bekkers et.al.~\cite{DBLP:journals/corr/abs-2411-04042} essentially contains right-deep plans with
a slight generalization.
On the other hand, as we focus on an algorithm-level evaluation of \TTJ in this paper,
our implementation is not yet competitive with the highly optimized systems mentioned above.
Future work shall explore how to incorporate various system-level optimizations
like query compilation, vectorization, and parallelization
into \TTJ.

Going beyond acyclic queries, the standard way to handle cyclic queries
is to break up the query with (hyper-)tree decomposition~\cite{Gottlob2016}.
Such decomposition results in smaller cyclic subqueries connected by an acyclic ``skeleton''.
Each cyclic subquery can then be computed with worst-case optimal join 
algorithms~\cite{ngo2018wcoja,DBLP:journals/corr/abs-1210-0481}.
With the result of each subquery materialized, the final output can then be
computed with \YA.
As we will show in \cref{sec:cyclic}, \TTJ can support cyclic queries with only
a few modifications.
Compared to the tree decomposition approach, \TTJ does not require materializing
intermediate results, thus requiring only constant space in addition to the linear
space required to store and index the input relations.
While the worst-case time complexity of \TTJ does not match that obtained by
tree decompositions, the advantage of each approach depends on the data.

As the name suggests, TreeTracker Join is a direct decendent of the TreeTracker
algorithm~\cite{DBLP:journals/ai/BayardoM94} from Constraint Satisfaction. The
TreeTracker CSP algorithm resolved Dechter's
conjecture~\cite{DBLP:journals/ai/Dechter90} that there existed an optimal
algorithm for acyclic CSPs free of any preprocessing. The connection between
query answering and constraint satisfaction is a recurring theme in the literature to the extent that an expression emerged, the problems are two sides of the same
coin~\cite{DBLP:journals/jcss/KolaitisV00,Miranker1997QueryEA}. There are substantive differences that make TreeTracker and \TTJ different. First the constraint satisfaction problem concerns the existence of a non-empty model for a large logical formula. Thus, constraint satisfaction algorithms including TreeTracker stop execution and return \textsf{TRUE} upon identifying what in a relational query would be just one row of the result. In contrast, \TTJ produces all tuples in the query output. Second, the TreeTracker algorithm does not make use of hash tables, and is thus structured like a nested loop join rather than a hash join. 
This is because unlike the
study of queries in databases, constraint satisfaction rarely specializes the
problem to only equality predicates. Combining these two differences TreeTracker incorporates ad-hoc data structures, where \TTJ employs recognized indices commonly used in databases. These difference clearly manifest in the respective complexity analyses. 
The complexity of the best variation of the TreeTracker algorithm is polynomial in the input size and does not
consider the output size.  We prove below \TTJ runs in linear time in the total
size of the input and output.
\section{Preliminaries}\label{sec:preliminaries}
In this section, we present the foundational concepts concerning acyclic join queries and the specific definitions adopted in this paper.

\subsection{Join Queries and Acyclicity}
We consider natural join queries, also known as  \emph{full conjunctive queries}, of the form:
\begin{equation}\label{eq:query}
Q(\bm{x}) = R_1(\bm{x}_1) \bowtie R_2(\bm{x}_2) \bowtie \cdots \bowtie R_n(\bm{x}_n)
\end{equation}
where each $R_i$ is a relation name, each $\bm{x}_i$ (and $\bm{x}$) a tuple of distinct variables,
and every $x \in \bm{x}_i$ also appears in $\bm{x}$.
We call each $R_i(\bm{x}_i)$ an {\em atom},
and $x_i$ the {\em schema} of $R_i$,
denoted as $\Sigma(R_i)$.
We extend the notion of schema to tuples in the standard way
and write $\Sigma(t)$ for the schema of $t$.
The query computes the set%
\footnote{For clarity we assume set semantics. No change is needed for \TTJ to support bag semantics}
$Q = \setof{\bm{x}}{\bigwedge_{i \in [n]} \bm{x}_i \in R_i}$.
We sometimes write $Q$ and not $Q(\bm{x})$ to reduce clutter,
and identify $Q$ with its set of relations.
For example, $Q - \{R_i\}$ denotes the query $Q$ with $R_i$ removed.

%
%
%
\begin{definition}[Join Tree]\label{def:join-tree}
A \emph{join tree} for a query $Q$ is a tree where each node is an atom in $Q$,
such that for every variable $x$, the nodes containing $x$ form a connected subtree.
\end{definition}
%
A query $Q$ is \emph{acyclic} (more specifically $\alpha$-acyclic) if there 
exists a {\em join tree} for Q.

For clarity we rewrite the query in example~\ref{ex:main} and detail one of its join trees:
\begin{equation}\label{eq:main}
Q_1 = R(i, x) \bowtie S(x, y, j) \bowtie T(y, k) \bowtie U(y, l)
\end{equation}
One join tree has $R(i,x)$ at the root, $S(x,y,j)$ as its child, 
and $T(y,k)$ and $U(y,l)$ as children of $S$.
We encourage the reader to draw a picture of this join tree for reference.
One can construct a join tree for any acyclic query with the GYO 
algorithm~\cite{Yu1979AnAF,graham1980universal},
which works by finding a sequence of {\em ears}.
To define ear, we first define a {\em key schema}:
\begin{definition}[Key Schema]
For a query $Q$ of the form~\eqref{eq:query},
the {\em key schema} of an atom $R_i(\bm{x}_i)$ in $Q$,
denoted as $\keys(Q, R_i)$, is the set of variables
shared between $R_i(\bm{x}_i)$ with the other atoms in $Q$;
i.e., $\keys(Q, R_i) = \bm{x}_i \cap \bigcup_{j \in [n] \wedge j\neq i} \bm{x}_j$.
\end{definition}
Intuitively, $\keys(Q, R_i)$ form the keys of $R_i$'s hash table, 
if we compute $(Q - \set{R_i}) \bowtie R_i$ using binary hash join.
\begin{definition}[Ear]\label{def:ear}
Given a query $Q$ of the form~\eqref{eq:query},
an atom $R_i(\bm{x}_i)$ is an {\em ear} if it
satisfies the property $\exists j\neq i: \bm{x}_j \supseteq \keys(Q, R_i)$.
In words, there is another atom $R_j(\bm{x}_j)$ that contains all
the variables in $R_i$'s key schema.
We call such an $R_j$ a \emph{parent} of $R_i$. 
\end{definition}
The parent concept is central to the \TTJ algorithm. The parent's schema include all
of its children's keys. When a hash lookup fails at a child,
\TTJ will backjump to the parent.
Figure~\ref{fig:parent} shows an algorithm to find
the first parent of an ear in $Q$,
where $Q$ is represented as a list of atoms.


The GYO algorithm for constructing join trees is shown in Figure~\ref{fig:gyo}:
we start with a forest where each atom makes up its own tree,
then for every ear, we attach it to its parent and remove that ear from the query.
Note that it is possible for the algorithm to produce a forest of
disjoint trees when the query contains Cartesian products.
For simplicity, we will ignore such cases.
\begin{figure}
\begin{subfigure}[t]{0.45\textwidth}
\begin{lstlisting}
def GYO(Q):
  forest = { tree(R) for R in Q }
  while not Q.is_empty():
    R = find-ear(Q)
    P = parent(Q, R)
    forest.set_parent(R, P)
    Q.remove(R)
  return forest
\end{lstlisting}
\caption{The GYO alglorithm.}
\label{fig:gyo}
\end{subfigure}
\begin{subfigure}[t]{0.5\textwidth}
\begin{lstlisting}[showlines=true]
def parent(R, Q):
  if Q.is_empty(): return None
  keys = $\Sigma\text{(R)} \cap \bigcup_{\text{S} \in \text{Q-\{R\}}} \Sigma\text{(S)}$
  for S in Q - {R}:
    if keys $\subseteq$ $\Sigma\text{(S)}$: 
      return S
  # we did not find a valid parent
  return None
\end{lstlisting}
\caption{Find a parent of $R$ in $Q$ if one exists.}
\label{fig:parent}
\end{subfigure}
\caption{GYO reduction and parent computation.}
\end{figure}
\begin{definition}[GYO reduction order]\label{def:gyo}
Given a query $Q$ of the form~\eqref{eq:query},
a {\em GYO reduction order} for a query $Q$
is a sequence $[R_{p_1}, R_{p_2}, \ldots, R_{p_n}]$ that is a permutation 
of $[R_1, R_2, \ldots, R_n]$, such that for every $i<n$, 
the atom $R_{p_i}$ is an ear in the (sub)query $R_{p_i} \bowtie \cdots \bowtie R_{p_n}$.
\end{definition}
Equivalently, it is the same order of atoms as visited by the GYO algorithm.
The reader can verify $[U,T,S,R]$is a GYO reduction order for $Q_1$.
The existence of a GYO reduction order and the existence of a join tree are equivalent.

%
\begin{theorem}[\cite{Yu1979AnAF,graham1980universal}]\label{thm:gyo}
A query $Q$ has a join tree (i.e., $Q$ is $\alpha$-acyclic) if and only if it has a GYO reduction order.
\end{theorem}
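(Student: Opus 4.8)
The plan is to prove both directions by induction on the number of atoms $n$, exploiting the structural correspondence between "peeling off an ear" and "attaching a leaf to a join tree." The base case $n=1$ is immediate: a single atom is trivially its own join tree, and the length-one sequence is a (vacuous) GYO reduction order since the condition ``$R_{p_i}$ is an ear'' is only required for $i<n$.

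For the direction ``GYO reduction order $\Rightarrow$ join tree,'' suppose $[R_{p_1}, \ldots, R_{p_n}]$ is a GYO reduction order for $Q$. Then $[R_{p_2}, \ldots, R_{p_n}]$ is a GYO reduction order for the subquery $Q' = R_{p_2} \bowtie \cdots \bowtie R_{p_n}$, so by the induction hypothesis $Q'$ has a join tree $\mathcal{T}'$. Since $R_{p_1}$ is an ear in $Q$, it has a parent $R_{p_j}$ with $j \ge 2$ such that $\Sigma(R_{p_j}) \supseteq \keys(Q, R_{p_1})$. I would then form $\mathcal{T}$ by adding $R_{p_1}$ as a new child of the node $R_{p_j}$ in $\mathcal{T}'$. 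The crux is verifying the connectedness condition of Definition~\ref{def:join-tree}: for any variable $x$, the nodes of $\mathcal{T}$ containing $x$ form a connected subtree. If $x \notin \Sigma(R_{p_1})$, this is inherited directly from $\mathcal{T}'$. If $x \in \Sigma(R_{p_1})$ but $x$ appears in no other atom (i.e., $x \notin \keys(Q,R_{p_1})$), then $R_{p_1}$ is the only node containing $x$ and the subtree is a singleton. The interesting case is $x \in \keys(Q, R_{p_1})$: then $x \in \Sigma(R_{p_j})$ by the parent property, so $R_{p_j}$ is in the $x$-subtree of $\mathcal{T}'$, which is connected; adjoining the leaf $R_{p_1}$ (also containing $x$) directly to $R_{p_j}$ keeps it connected.

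For the converse ``join tree $\Rightarrow$ GYO reduction order,'' suppose $Q$ has a join tree $\mathcal{T}$. I would pick any leaf node $R_\ell$ of $\mathcal{T}$ with parent node $R_p$, and first argue that $R_\ell$ is an ear of $Q$ with parent $R_p$ — i.e., $\Sigma(R_p) \supseteq \keys(Q, R_\ell)$. This is exactly where the connectedness property does the work: if $x \in \keys(Q, R_\ell)$, then $x$ occurs in $R_\ell$ and in some other atom; since the $x$-nodes form a connected subtree containing the leaf $R_\ell$, that subtree must also contain $R_\ell$'s unique neighbor $R_p$, hence $x \in \Sigma(R_p)$. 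Removing the leaf $R_\ell$ from $\mathcal{T}$ yields a join tree for $Q - \{R_\ell\}$ (connectedness of each variable's node set is preserved under deleting a leaf), so by the induction hypothesis $Q - \{R_\ell\}$ has a GYO reduction order $[R_{q_1}, \ldots, R_{q_{n-1}}]$; prepending $R_\ell$ gives the desired order for $Q$, since $R_\ell$ is an ear in $Q$ itself.

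The main obstacle is handling the edge cases cleanly in the connectedness argument — in particular, keeping straight the three cases for a variable $x$ relative to $\keys(Q, R_{p_1})$, and being careful that ``ear'' requires the parent to lie among the \emph{remaining} atoms (so that the GYO reduction order condition is literally about subqueries). I would also note explicitly that the GYO reduction order's requirement is only imposed for $i < n$, matching the fact that the last atom need not be an ear of the trivial one-atom query; this alignment is what makes the induction go through without friction. One should also remark that when $Q$ has Cartesian-product components the statement still holds under the standard convention of allowing a join \emph{forest}, but since the excerpt already says it ignores such cases, I would restrict attention to connected queries and not belabor this.
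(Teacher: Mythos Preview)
Your proof is correct and follows the standard inductive argument for this classical result. Note, however, that the paper does not actually prove this theorem: it is stated as a cited result from~\cite{Yu1979AnAF,graham1980universal} with no accompanying proof in the text, so there is no ``paper's own proof'' to compare against. Your write-up is a perfectly good self-contained justification should one be desired.
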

%
%
\subsection{Binary Hash Join}

In this paper we focus on hash-based join algorithms.
For theoretical analyses we focus on left-deep linear plans;
for practical implementation we follow the standard practice 
and decompose each bushy plan into a sequence of left-deep linear plans,
materializing each intermediate result.
\begin{definition}[Query Plan]
A (left-deep linear) query plan for a query $Q$ of the form~\eqref{eq:query}
is a sequence $[R_{p_1}, R_{p_2}, \ldots, R_{p_n}]$ 
that is a permutation of $Q$'s relations $[R_1, R_2, \ldots, R_n]$.
\end{definition}
For consistency we adopt 1-based indexing for query plans,
so the first relation in the plan is stored at $i=1$.
An example query plan for $Q_1$ in~\ref{eq:main} is $[R,S,T,U]$.
One may notice similarities between a GYO reduction order and a query plan.
The reason for this will become clear.

\begin{figure*}
\begin{subfigure}[t]{0.50\textwidth}
\centering
\begin{lstlisting}[escapechar=|,showlines=true]
def join(t, plan, i):
  if i > plan.len(): print(t)
  else: 
    R = plan[i]; k = $\pi_{\keys(\text{plan[1..i],R})}(t)$
    for r in R[k]:
      join(t$\concat$r, plan, i+1)
\end{lstlisting}
\caption{Pipelined left-deep binary hash join}
\label{fig:binary-join}
\end{subfigure}%
\begin{subfigure}[t]{0.50\textwidth}
\begin{lstlisting}
def YA(Q, order):
  for R in order: # semijoins reduction
    P = parent(Q, R); Q.remove(R)
    if P is not None: P = P $\ltimes$ R
  # compute the output with hash join
  return join((), reverse(order), 1)
\end{lstlisting}
\caption{Yannakakis's algorithm}
\label{fig:ya}
\end{subfigure}
\caption{Binary hash join and Yannakakis's algorithm. The \lstinline|plan| array is 1-indexed.}
\end{figure*}

We follow the push-based model~\cite{DBLP:journals/pvldb/Neumann11} and specialize the 
binary hash join algorithm for pipelined left-deep plans as shown in Figure~\ref{fig:binary-join}.
We write $\pi_s(t)$ to project the tuple $t$ onto the schema $s$,
and $t\concat r$ to concatenate the tuples $t$ and $r$ while resolving
the schema appropriately.
Execution begins by passing to \lstinline|join| the empty tuple $t = ()$,
a query plan, and $i = 1$.
Although we do not need to build a hash table for the left-most relation
(the first relation in the plan), 
for simplicity we assume that there is a (degenerate) hash table
mapping the empty tuple $()$ to the entire left-most relation.
The algorithm starts by checking if the plan has been exhausted and if so, 
output the tuple $t$.
Otherwise, we retrieve the $i$-th relation $R_{p_i}$ from the plan,
and lookup from $R_{p_i}$ the matching tuples that join with $t$.
For each match, we concatenate it with $t$ and recursively call \lstinline|join|.


It may be helpful to unroll the recursion over a query plan, 
and we encourage the reader to do so for $Q_1$ in~\eqref{eq:main}
with the plan $[R,S,T,U]$. This will generate the 
same code as in Figure~\ref{fig:binary-exe}.


\subsection{Yannakakis's Algorithm}
Yannakakis's original algorithm~\cite{DBLP:conf/vldb/Yannakakis81}
makes two preprocessing passes over the input relations. A third pass 
computes the joins yielding  the final output.
Bagan, Durand, and Gandjean~\cite{Bagan2007OnAC} improved the original algorithm by
eliminating  the second preprocessing pass.
For brevity we only describe the latter algorithm.  Following common usage, hereafter, we will refer to the improved version as Yannakakis's alglorithm (\YA).

Shown in Figure~\ref{fig:ya} is, given a GYO reduction order,
the relations are preprocessed using semijoins, 
then the output is computed with standard hash join.
Equivalently, the semijoin preprocessing step can be performed by traversing a join tree bottom-up,
and the output computed  with hash join by traversing the tree top-down.
\begin{example}
Given the query $Q_1$ in~\eqref{eq:main} and the GYO reduction order $[U,T,S,R]$,
\YA first performs the series of semijoins, $S'= S\ltimes U$, $S'' =S'\ltimes T$, 
and $R'= R\ltimes S''$, then computes the output with the plan $[R', S'', T, U]$.
The reader may refer to the join tree of $Q_1$ and confirm we are 
traversing the tree bottom-up then top-down.
\end{example}

\section{TreeTracker Join}\label{sec:algorithm}

The TreeTracker Join algorithm is shown in Figure~\ref{fig:ttj}.
The algorithm follows the same structure as binary hash join.
The difference starts on line~\ref{lst:ttj-parent}
right before the hash lookup \lstinline|R[k]|.
If this lookup fails (i.e., it finds no match),
and if $R$ has a parent $P$ that appears before $R$ in the plan,
then \TTJ backjumps to the \lstinline|for|-loop at $P$'s recursive level,
by returning \lstinline|P| (line~\ref{lst:backjump}).
This is similar to throwing an exception
which is ``caught'' at the loop level of \lstinline|P|,
as we will explain on line~\ref{lst:ttj-catch}.
Otherwise, if the lookup \lstinline|R[k]| succeeds,
the algorithm iterates over each matching tuple $r$
and calls itself recursively (line~\ref{lst:ttj-recursive}).
This recursive call has three possible results.
A result containing a relation (line~\ref{lst:ttj-catch})
signifies a backjump has occurred,
with that relation as the backjumping point.
If the backjumping point is the same as the current
relation $R$, then the tuple $r$ is deleted from $R$ (line~\ref{lst:ttj-delete}).
If the backjumping point is different from $R$,
then the backjump continues by returning \lstinline|result|
which interrupts the current loop.
Finally, if the recursive call (implicitly) returns \lstinline|None|,
the algorithm continues to the next loop iteration.

\begin{figure}
\begin{subfigure}[t]{0.5\textwidth}
\begin{lstlisting}[escapeinside={(*}{*)}]
def ttj(t, plan, i):
  if i > plan.len(): print(t)
  else: 
    R = plan[i]; k = $\pi_{\keys(\text{plan[1..i],R})}(t)$
    P = parent(plan[1..i], R)(*\label{lst:ttj-parent}*)
    if R[k] is None & P is not None:(*\label{lst:ttj-lookup}*)
      return P # backjump to P (*\label{lst:backjump}*)
    for r in R[k]: (*\label{lst:ttj-loop}*)
      result = ttj(t$\concat$r, plan, i+1)(*\label{lst:ttj-recursive}*)
      if result == R: # catch backjump (*\label{lst:ttj-catch}*)
        R[k].delete(r)(*\label{lst:ttj-delete}*)
      elif: result is not None:
        return result # continue backjump
\end{lstlisting}
\caption{TreeTracker join}
\label{fig:ttj}
\end{subfigure}%
\begin{subfigure}[t]{0.5\textwidth}
\begin{lstlisting}[basicstyle=\color{lightgray}\ttfamily\small,escapeinside={(*}{*)}, numbers=right]
if R[()] is None: throw BJ(None) (*\label{lst:throw1}*)
(*\color{black}{\textbf{for} i,x \textbf{in} R:}*)
  try: if S[x] is None: throw BJ(R) (*\label{lst:throw2}*)
    (*\color{black}{\textbf{for} y,j \textbf{in} S[x]:}*)
      (*\color{black}{\textbf{try:}}*) if T[y] is None: throw BJ(S) (*\label{lst:throw3}*)
        (*\color{black}{\textbf{for} k \textbf{in} T[y]:}*)
          try: (*\color{black}{\textbf{if} U[y] \textbf{is} None: \textbf{throw} BJ(S)*)
            (*\color{black}{\textbf{for} l \textbf{in} U[y]:}*)
              try: (*\color{black}{output(x,y,i,j,k,l)}*)
              catch BJ(U): U[y].delete(l)
          catch BJ(T): T[y].delete(k)
      (*\color{black}{\textbf{catch} BJ(S): S[x].delete(y,j)}\label{lst:catch-s}*)
  catch BJ(R): R[()].delete(i, x) (*\label{lst:catch1}*)
\end{lstlisting}
\caption{Execution of \TTJ for Example~\ref{ex:main}}
\label{fig:ttj-unrolled}
\end{subfigure}
\caption{The TreeTracker algorithm and an example execution.}
\end{figure}

\begin{example}\label{ex:ttj-unroll}
It can be helpful to unroll the recursive algorithm over a query plan.
Given $Q_1$ in~\eqref{eq:main} and the plan $[R,S,T,U]$,
Figure~\ref{fig:ttj-unrolled} shows the execution of \TTJ.
To make the code more intuitive, we replace \lstinline|return|
statements with exception handling to simulate backjumping.
We gray out dead code and no-ops:
\begin{itemize}
  \item Line~\ref{lst:throw1} is unreachable because \lstinline|R[()]| is always
the entire relation $R$, and $R$ has no parent.
  \item Line~\ref{lst:throw2} (and~\ref{lst:catch1}) is a no-op, because it would just backjump to 
the immediately enclosing loop, and removing a tuple from $R$ is useless 
because $R$ is at the outermost loop%
\footnote{In Section~\ref{sec:optimizations} we will introduce an additional 
optimization that makes ``removing'' from the outermost relation meaningful.}.
  \item Technically the if-statement on line~\ref{lst:throw3} is useful even though it 
only backjumps one level, because the backjump would remove a tuple from 
$S$ when caught (line~\ref{lst:catch-s}). However for the input data in Example~\ref{ex:main}
we do not need this, and we gray it out to reduce clutter.
  \item Finally, the innermost two try-catch pairs are unreachable,
because $U$ and $T$ have no children.
\end{itemize}
At this point, the remaining code in black is essentially the same
as the code in Figure~\ref{fig:tuple-delete}.
As a side note, a sufficiently smart compiler with partial evaluation
or just-in-time compilation
could remove the dead code and no-ops as we have done above.
\end{example}

\subsection{Correctness and Asymptotic Complexity}
The correctness proof starts with an observation on the
relationship between different calls to \lstinline|ttj|:
\begin{proposition}{\label{prop:ttj-calls}}
If $\ttj(t_j, p, j)$ recursively calls $\ttj(t_i, p, i)$, then $t_j \subseteq t_i$.
\end{proposition}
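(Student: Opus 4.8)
The plan is to prove Proposition~\ref{prop:ttj-calls} by induction on the length of the recursion chain from $\ttj(t_j, p, j)$ down to $\ttj(t_i, p, i)$. The base case is a single recursive step, and the inductive step chains steps together using transitivity of $\subseteq$, so the crux is the single-step claim: if $\ttj(t_j, p, j)$ directly calls $\ttj(t_{j+1}, p, j+1)$ on line~\ref{lst:ttj-recursive}, then $t_j \subseteq t_{j+1}$.

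First I would examine the body of \lstinline|ttj|. The only recursive call is on line~\ref{lst:ttj-recursive}, and it has the form $\ttj(t \concat r, p, i+1)$ where $t$ is the tuple passed into the current call (so $t = t_j$, $i = j$) and $r$ is drawn from $R[k]$ with $R = p[i]$. Hence $t_{j+1} = t_j \concat r$. The entire content of the argument is then that $t_j \subseteq t_j \concat r$, i.e., that concatenation only extends a tuple and never overwrites or drops any of its existing variable bindings. This follows from how $\concat$ is defined in the preliminaries (``concatenate the tuples $t$ and $r$ while resolving the schema appropriately''): the schema of $t_j \concat r$ is $\Sigma(t_j) \cup \Sigma(r)$, and on $\Sigma(t_j)$ the result agrees with $t_j$. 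Viewing a tuple as a set of variable-value pairs (the paper writes $t_j \subseteq t_i$ in exactly this sense), we get $t_j \subseteq t_j \concat r$ immediately. One point worth stating explicitly is that $\concat$ is well-defined here — the shared variables of $\Sigma(t_j)$ and $\Sigma(r)$ carry equal values — because $r \in R[k]$ with $k = \pi_{\keys(p[1..i], R)}(t_j)$, so $r$ agrees with $t_j$ on exactly the key schema; this is inherited from binary hash join and is what makes the concatenation meaningful rather than conflicting.

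Then I would assemble the induction: suppose $\ttj(t_j, p, j)$ reaches $\ttj(t_i, p, i)$ through a chain of calls $\ttj(t_j, p, j) \to \ttj(t_{j+1}, p, j+1) \to \cdots \to \ttj(t_i, p, i)$ with $i \ge j$. (Note the third argument increases by exactly one at each recursive call, so any descendant call has strictly larger index, and $i > j$ whenever the chain is nonempty; the $i = j$ case where the two calls coincide is trivial.) By the single-step claim, $t_k \subseteq t_{k+1}$ for each $j \le k < i$, and transitivity of set inclusion gives $t_j \subseteq t_i$.

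I do not expect a serious obstacle here — the proposition is essentially a bookkeeping observation about the shape of the recursion. The only thing to be careful about is being precise that backjumping (the \lstinline|return P| on line~\ref{lst:backjump}) and tuple deletion (line~\ref{lst:ttj-delete}) do not introduce any additional recursive calls: the sole call site is line~\ref{lst:ttj-recursive}, so no case analysis on the control flow is needed beyond identifying that one line. If one wanted to be scrupulous, one could note that deletions from $R[k]$ happening during the loop on line~\ref{lst:ttj-loop} might change which $r$ values are iterated, but this only affects \emph{which} recursive calls occur, not the invariant that each such call receives an argument extending $t_j$.
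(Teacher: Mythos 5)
Your proposal is correct and matches the paper's argument, which simply observes that the argument to each nested call is constructed by appending tuples to $t_j$; your version is a more careful elaboration of the same one-step-plus-transitivity idea. No issues.
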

\begin{proof}
The proposition follows from the definition of the algorithm, where the $t_i$ argument
to the nested call is constructed by appending tuples to $t_j$.
\end{proof}
\TTJ differs from binary join only upon a lookup failure. In that case
it backjumps to the parent of the relation that caused the failure,
and deletes the tuple that caused the failure.
Therefore, \TTJ is correct as long as it never deletes or ``backjumps over''
any tuple that should be in the output.
We first prove that a deleted tuple
can never contribute to any outupt.
In the following we write $\pi_R(t)$ for the projection of $t$ 
onto the schema of $R$.
\begin{lemma}\label{lem:delete}
Suppose a tuple $r_j$ is deleted from $R_j$ during the execution of \TTJ for a query $Q$
using plan $p$.
Then $\forall \tout \in Q : \pi_{R_j}(\tout) \neq r_j$.
\end{lemma}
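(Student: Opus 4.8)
The plan is to prove, by induction on the temporal order of the deletions performed by \TTJ, the stronger invariant that at every moment of the execution each relation's current contents still include all of its tuples that could contribute to some output tuple of $Q$ (equivalently: the $m$-th deleted tuple is non-contributing under the hypothesis that the first $m-1$ are). \cref{lem:delete} is then the single-deletion case. For the inductive step I must unpack what causes a deletion: the statement at recursion level $a$ that removes a tuple $r_j$ from $R_j := p[a]$ fires only because the recursive call $\ttj(t_a\concat r_j,\, p,\, a+1)$ — where $t_a$ is the partial tuple held by that level-$a$ call — returned the relation $R_j$ itself.

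I would first isolate, as an auxiliary claim proved by induction on the recursion, a precise description of backjump return values: whenever a call $\ttj(t',p,i')$ returns a relation $P$ rather than \lstinline|None|, then $P = p[c]$ for some $c < i'$, and there is a level $b \ge i'$ with atom $R_b = p[b]$ and a partial tuple $t_b \supseteq t'$ (chaining \cref{prop:ttj-calls} down the call stack) such that at some moment during that call the probe $R_b[k_b]$ with $k_b = \pi_{\keys(p[1..b],R_b)}(t_b)$ failed, and $\keys(p[1..b],R_b) \subseteq \Sigma(P)$ because $P = \mathrm{parent}(p[1..b],R_b)$. The induction here must use that a plan is a permutation, so distinct levels hold distinct atoms; this is what forces the returned $P$ to travel up the recursion unchanged and be ``caught'' exactly at the level holding it. Applying the claim to the triggering call ($i' = a+1$, $P = R_j = p[a]$, hence $c = a$) yields a \emph{witness}: a level $b > a$, atom $R_b = p[b]$, and a tuple $t_b$ extending $t_a\concat r_j$ with a failed probe $R_b[k_b]$. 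Since $\keys(p[1..b],R_b) \subseteq \Sigma(R_j)$ and $\pi_{\Sigma(R_j)}(t_b) = r_j$ (because $t_a\concat r_j \subseteq t_b$ and $t_a\concat r_j$ restricted to $\Sigma(R_j)$ equals $r_j$), the probe key is determined by $r_j$ alone: $k_b = \pi_{\keys(p[1..b],R_b)}(r_j)$.

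With the witness in hand I would close the argument. The probe $R_b[k_b]$ took place strictly before $r_j$ was deleted, hence after strictly fewer deletions; by the induction hypothesis every one of those earlier deletions removed a non-contributing tuple, so the contents of $R_b$ at probe time still held every contributing tuple of $R_b$. As the probe failed, no contributing tuple of $R_b$ has $\keys(p[1..b],R_b)$-value $k_b$. Now suppose, towards a contradiction, that some $\tout \in Q$ satisfied $\pi_{R_j}(\tout) = r_j$. Then $r_b := \pi_{R_b}(\tout)$ is a contributing tuple of $R_b$, and because $\keys(p[1..b],R_b)$ lies in both $\Sigma(R_b)$ and $\Sigma(R_j)$ we have $\pi_{\keys(p[1..b],R_b)}(r_b) = \pi_{\keys(p[1..b],R_b)}(\tout) = \pi_{\keys(p[1..b],R_b)}(r_j) = k_b$, contradicting the previous sentence. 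Hence no such $\tout$ exists, which is the claim.

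I expect the main obstacle to be the bookkeeping of the auxiliary claim: showing rigorously that the value returned by a backjump is exactly $\mathrm{parent}(p[1..b],R_b)$ for the atom $R_b$ whose probe failed, that it is re-returned at every intervening level without being altered until it reaches the level holding that atom, and that the chain of recursive calls keeps partial tuples consistent on shared variables (so $\pi_{R_j}(t_b) = r_j$). Once that structural fact is pinned down, the rest is short, using only the definition of \emph{parent} (a child's key schema is contained in its parent's schema), the fact that every output tuple projects into every atom of $Q$, and elementary properties of projection onto shared schemas.
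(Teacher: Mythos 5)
Your proof is correct and follows essentially the same route as the paper's: locate the failed lookup nested inside the call on $t_a\concat r_j$ whose backjump target is $R_j$, use the parent/key-schema containment $\keys(p[1..b],R_b)\subseteq\Sigma(R_j)$ together with \cref{prop:ttj-calls} to show the failed key equals the corresponding projection of $r_j$, and conclude that no output tuple can agree with $r_j$ on that key. The one substantive difference is your explicit induction on the temporal order of deletions, which is what licenses the step ``the probe failed, hence no \emph{contributing} tuple of $R_b$ has key $k_b$'' when $R_b$ may itself have already lost tuples; the paper's proof passes over this point (it infers directly from the failed lookup that no tuple of $R_i$ contains $k_i$, which tacitly presupposes that all earlier deletions were sound), so your version is the more carefully grounded one.
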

\begin{proof}
Let $p$ be $[R_1, \ldots, R_n]$,
and $t_j$ be the value of the argument $t$ in scope at the time of the deletion.
Because $r_j$ is deleted from $R_j$,  
there must be a failed lookup $R_i[k_i]$
recursively nested within the call to $\ttj(t_j \concat r_j, p, j+1)$,
and $R_j$ is the parent of $R_i$.
Let $K_i = \keys(p[1,\ldots, i],R_i)$,
and let $t_i$ be the value of $t$ at the time of the lookup failure.
Then $t_j \concat r_j \subseteq t_i$ by Proposition~\ref{prop:ttj-calls}.
By definition of parent, $K_i \subseteq \Sigma(R_j) \subseteq \Sigma(t_j \concat r_j) \subseteq \Sigma(t_i)$,  
so $k_i = \pi_{K_i}(t_i) = \pi_{K_i}(t_j \concat r_j) = \pi_{K_i}(r_j)$.
However, since the lookup failure implies no tuple in $R_i$ contains $k_i$, 
any output tuple $\tout$ cannot contain $k_i$ either, 
i.e., $\forall \tout \in Q : \pi_{K_i}(\tout) \neq k_i$.
Therefore, $\forall \tout \in Q : \pi_{K_i}(\tout) \neq \pi_{K_i}(r_j)$
which implies $\forall \tout \in Q : \pi_{R_j}(\tout) \neq \pi_{R_j}(r_j)$.
\end{proof}

Next, we show \TTJ never backjumps over any tuple that contributes to the output.
Given a plan $p = [R_1, \ldots, R_n]$,
denote by $\pi_{[i]}(t)$ the projection of $t$ onto $\bigcup_{j\in [i]} \Sigma(R_j)$.
%
\begin{lemma}\label{lem:ttj}
For any tuple $\tout \in Q$, plan $p$ for $Q$, and $1 \leq i \leq |p|$,
 $\ttj(\pi_{[i-1]}(\tout), p, i)$
recursively calls $\ttj(\pi_{[i]}(\tout), p, i+1)$.
\end{lemma}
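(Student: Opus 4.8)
The plan is to prove this by induction on $i$, showing that at each recursive level the call $\ttj(\pi_{[i-1]}(\tout), p, i)$ proceeds normally (does not backjump away) and reaches the recursive call $\ttj(\pi_{[i]}(\tout), p, i+1)$. The base case $i = 1$ is immediate: execution begins with $\ttj((), p, 1)$, and $\pi_{[0]}(\tout) = ()$ since the empty union of schemas is empty. For the inductive step, I assume $\ttj(\pi_{[i-1]}(\tout), p, i)$ is reached and must show two things: (a) the hash lookup $R_i[k_i]$ with $k_i = \pi_{K_i}(\pi_{[i-1]}(\tout))$ succeeds and in particular contains the tuple $\pi_{R_i}(\tout)$; and (b) when the loop on line~\ref{lst:ttj-loop} iterates to $r = \pi_{R_i}(\tout)$, the recursive call $\ttj(\pi_{[i-1]}(\tout) \concat \pi_{R_i}(\tout), p, i+1)$ is issued, and $\pi_{[i-1]}(\tout) \concat \pi_{R_i}(\tout) = \pi_{[i]}(\tout)$.

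\smallskip

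For part (a), the key observation is that $\pi_{R_i}(\tout) \in R_i$ since $\tout \in Q$, and that $\pi_{K_i}(\pi_{R_i}(\tout)) = \pi_{K_i}(\pi_{[i-1]}(\tout)) = k_i$ because $K_i = \keys(p[1..i], R_i) \subseteq \Sigma(R_i) \cap \bigcup_{j<i}\Sigma(R_j)$, so the key is determined consistently by either projection — both equal $\pi_{K_i}(\tout)$. Hence $R_i[k_i]$ is non-empty (it contains at least $\pi_{R_i}(\tout)$), so the lookup does not fail and the guard on line~\ref{lst:ttj-lookup} is not triggered; execution enters the loop. I must also address that the hash table $R_i$ may have had tuples deleted during earlier backjumps — but Lemma~\ref{lem:delete} guarantees no deleted tuple is the projection of any output tuple, so $\pi_{R_i}(\tout)$ is still present. (One should note the deletions happen via $R_i[k].\text{delete}(r)$ only inside catching a backjump at level $i$, which is a sibling situation, so this is consistent; the cleanest argument simply invokes Lemma~\ref{lem:delete} as a black box.)

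\smallskip

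For part (b), once the loop reaches the iteration $r = \pi_{R_i}(\tout)$, the body unconditionally issues $\ttj(\pi_{[i-1]}(\tout) \concat r, p, i+1)$ on line~\ref{lst:ttj-recursive} — the only way this iteration is skipped is if an earlier iteration returned a non-\texttt{None}, non-$R_i$ result and the loop was interrupted by line~\ref{lst:ttj-recursive}'s sibling \texttt{return}. I need to rule this out: a backjump result returned at level $i$ names some relation $P = $ parent of a relation deeper in the plan, and if $P \neq R_i$ the loop is interrupted. The argument is that such an interruption would itself require a failed lookup below level $i$ along a branch that does \emph{not} pass through $\pi_{[i]}(\tout)$ — but this is about \emph{other} iterations of the loop, not the one at $r = \pi_{R_i}(\tout)$, and the question is whether those other iterations come \emph{before} $r$ in loop order. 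This is the main obstacle: I cannot control loop iteration order, so I must argue more carefully. The resolution is to strengthen the induction or argue by contradiction on the first output tuple that is "missed" — essentially, if execution of some sibling iteration backjumps past level $i$, the relation it backjumps to is an ancestor of $R_i$ in the plan, i.e. at position $< i$; such a result propagates up through level $i$ unchanged and does not stop at $R_i$. But wait — that interrupts the level-$i$ loop before we reach $r$! So I do need to show sibling iterations never backjump strictly above level $i$ before the $r$-iteration runs.

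\smallskip

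I expect to handle this by reformulating: prove the stronger claim that for every output tuple $\tout$, the sequence of calls $\ttj(\pi_{[0]}(\tout),p,1), \ttj(\pi_{[1]}(\tout),p,2), \ldots$ all occur, \emph{and} reasoning about the global control flow — namely, that a backjump returning a relation $P$ at level $i$ only happens after a genuine lookup failure below, which by Lemma~\ref{lem:delete} reasoning cannot be "on the path of $\tout$". The cleanest route is likely: observe that the loop at level $i$ over $R_i[k_i]$ is interrupted (before exhausting) only by a \texttt{return result} with $\texttt{result} \neq R_i$; trace that $\texttt{result}$ back to a lookup failure $R_m[k_m]$ at some level $m > i$ whose parent is at level $< i$; but then the tuple $r'$ in the interrupted iteration satisfies $\pi_{\text{parent}}(r') $ equals the failing key's parent-projection, which by the failure is not the projection of any output tuple; in particular $r' \neq \pi_{R_i}(\tout)$ is guaranteed only if that failing key is inconsistent with $\tout$ — which it is, because $\tout$'s extension down to level $m$ would succeed at $R_m$. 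Packaging this, the $r$-iteration for $r = \pi_{R_i}(\tout)$ is never the one that triggers or precedes a level-escaping backjump, so it runs, completing the induction. I would write the proof as a single induction with this control-flow lemma inlined, citing Proposition~\ref{prop:ttj-calls} for the $\pi_{[i-1]}(\tout)\concat r = \pi_{[i]}(\tout)$ bookkeeping and Lemma~\ref{lem:delete} for persistence of $\tout$'s projections in the hash tables.
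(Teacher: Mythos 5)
Your part (a) and the bookkeeping $\pi_{[i-1]}(\tout)\concat\pi_{R_i}(\tout)=\pi_{[i]}(\tout)$ are fine, and you have correctly located the crux: the loop at level $i$ could in principle be cut short when some \emph{sibling} iteration $r'$, enumerated by the hash table before $r=\pi_{R_i}(\tout)$, returns a backjump target $R_j$ with $j<i$, which propagates through level $i$ and terminates the loop. But your resolution of that crux does not close. You trace the escaping backjump to a failed lookup $R_m[k_m]$ at a level $m>i$ whose parent $R_j$ sits at position $j<i$, and from the failure you deduce that $k_m$ is inconsistent with $\tout$ and hence that $r'\neq\pi_{R_i}(\tout)$. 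That conclusion is both unearned (the key schema $K_m$ lives inside $\Sigma(R_j)$ and need not mention any attribute of $R_i$, so inconsistency of $k_m$ with $\tout$ says nothing about $r'$) and, more importantly, insufficient: even granting $r'\neq\pi_{R_i}(\tout)$, if such an $r'$ comes first in the loop order, the loop returns before ever reaching $\pi_{R_i}(\tout)$ and the lemma fails. What your write-up never establishes is that such a failure \emph{cannot occur anywhere} inside the call — which is exactly the point you flagged as "the main obstacle" and then left unresolved.

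The observation that closes it, and is essentially the whole of the paper's proof, is that $k_m$ does not depend on the sibling branch at all. Since $R_j$ is the parent of $R_m$, we have $K_m=\keys(p[1..m],R_m)\subseteq\Sigma(R_j)\subseteq\bigcup_{l\le i-1}\Sigma(R_l)$, and every tuple $t$ in scope at a nested call extends $\pi_{[i-1]}(\tout)$ by Proposition~\ref{prop:ttj-calls}; hence $k_m=\pi_{K_m}(t)=\pi_{K_m}(\tout)$ in \emph{every} iteration of the level-$i$ loop, not just the one at $\pi_{R_i}(\tout)$. That key matches $\pi_{R_m}(\tout)\in R_m$, which is never deleted by Lemma~\ref{lem:delete}, so the lookup succeeds and no backjump targeting a position $<i$ is ever thrown from within $\ttj(\pi_{[i-1]}(\tout),p,i)$. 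The only backjumps caught at level $i$ target $R_i$ itself; they delete tuples of $R_i[k_i]$ but never $\pi_{R_i}(\tout)$, so the loop runs to that tuple and issues the recursive call. With this one observation inserted, your argument becomes the paper's proof; as written, the step ruling out premature termination of the level-$i$ loop would fail.
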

\begin{proof}
Consider a lookup $R[k]$ that is recursively nested within the 
call to $\ttj(\pi_{[i-1]}(\tout), p, i)$
where $R$ has a parent $R_j$ with $j \in [i-1]$.
Then $k \subseteq \pi_{R_j}(\tout) \subseteq \tout$,
and because $\Sigma(k) \subseteq \Sigma(R)$,
we have $k\subseteq \pi_R(\tout) \in R$.
This means the lookup $R[k]$ will not fail.
This holds for all such $R$, so the algorithm never backjumps
from within the call $\ttj(\pi_{[i-1]}(\tout), p, i)$
to any $R_j$ for $j \in [i-1]$.
The algorithm may still backjump to $R_{i}$, but by Lemma~\ref{lem:delete},
$\pi_{R_{i}}(\tout)$ is never deleted from $R_{i}$, and therefore
the algorithm will recursively call $\ttj(\pi_{[i-1]}(\tout)\concat\pi_{R_i}(\tout), p, i+1)$
which is the same as $\ttj(\pi_{[i]}(\tout), p, i+1)$.
\end{proof}
%
%
We arrive at the correctness of \TTJ by applying Lemma~\ref{lem:ttj} inductively
over the query plan.
\begin{theorem}
Given any plan \lstinline|p| for $Q$, 
\lstinline|ttj((),p,1)| computes $Q$.
\end{theorem}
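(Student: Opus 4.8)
The statement asserts that $\ttj((),p,1)$ \emph{computes} $Q$, i.e.\ the set of tuples it prints is exactly $Q$. The plan is to prove the two inclusions separately: completeness (every $\tout\in Q$ is printed) will follow almost immediately from Lemma~\ref{lem:ttj}, while soundness (every printed tuple lies in $Q$) will come from a short invariant argument that mirrors the standard correctness proof of binary hash join.

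For completeness, fix $\tout\in Q$ and write $p=[R_1,\dots,R_n]$. First I would observe that $\bigcup_{j\in[0]}\Sigma(R_j)=\emptyset$, so $\pi_{[0]}(\tout)=()$ and $\ttj((),p,1)$ is literally $\ttj(\pi_{[0]}(\tout),p,1)$. Then I would apply Lemma~\ref{lem:ttj} with $i=1,2,\dots,n$ in turn, chaining the conclusions (reading ``recursively calls'' transitively, consistent with its use in Proposition~\ref{prop:ttj-calls}): $\ttj((),p,1)$ recursively calls $\ttj(\pi_{[1]}(\tout),p,2)$, which recursively calls $\ttj(\pi_{[2]}(\tout),p,3)$, \dots, down to $\ttj(\pi_{[n]}(\tout),p,n+1)$. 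Since every variable of $Q$ occurs in some atom, $\bigcup_{j\in[n]}\Sigma(R_j)$ is the full schema of $\tout$, so $\pi_{[n]}(\tout)=\tout$; and because $n+1>|p|$, this last call executes \lstinline|print|$(\tout)$. Hence $\tout$ is among the printed tuples.

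For soundness, I would prove by induction on the depth of a recursive call the invariant: whenever $\ttj(t,p,i)$ is invoked during the execution of $\ttj((),p,1)$, we have $\pi_{R_j}(t)\in R_j$ for every $j<i$, where $R_j$ denotes the relation as currently stored (always a subset of the input relation, since \TTJ only ever deletes from hash tables, never inserts). The base case $\ttj((),p,1)$ is vacuous. In the inductive step, the only recursive calls made by $\ttj(t,p,i)$ are of the form $\ttj(t\concat r,p,i+1)$ with $r\in R[k]$, $R=p[i]$, and $k=\pi_{\keys(p[1..i],R)}(t)$; unfolding $\keys(p[1..i],R)=\Sigma(R)\cap\bigcup_{j<i}\Sigma(R_j)=\Sigma(R)\cap\Sigma(t)$ shows that $t$ and $r$ agree on their shared variables, so $t\concat r$ is well defined, $\pi_{R}(t\concat r)=r\in R$, and for $j<i$ we have $\Sigma(R_j)\subseteq\Sigma(t)$, hence $\pi_{R_j}(t\concat r)=\pi_{R_j}(t)\in R_j$ by the induction hypothesis. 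A printed tuple arises only from a call $\ttj(t,p,n+1)$, so the invariant gives $\pi_{R_j}(t)\in R_j$ for all $j\le n$; since $\Sigma(t)=\bigcup_{j\le n}\Sigma(R_j)$ is the full query schema, $t\in Q$.

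The step I expect to require the most care is soundness in the presence of the two features that distinguish \TTJ from plain binary join: deleting $r$ from $R[k]$ on line~\ref{lst:ttj-delete} \emph{while} the loop on line~\ref{lst:ttj-loop} is still iterating over $R[k]$, and aborting that loop via a backjump. The key point to make explicit is that neither event can violate the invariant above: deletion only shrinks the stored relations, and a backjump only suppresses recursive calls that would otherwise have been made, so the printed set is a subset of what binary hash join on $p$ would print and is therefore contained in $Q$. The dual worry---that a deletion might discard a tuple needed for a genuine output---is exactly what Lemma~\ref{lem:delete} rules out, but it is already folded into Lemma~\ref{lem:ttj}, so it does not resurface in this proof. (We work under set semantics, as the paper does; tracking multiplicities for bag semantics would use the same call-tracing skeleton.)
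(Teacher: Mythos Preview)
Your proposal is correct and follows essentially the same approach as the paper: completeness is obtained by chaining Lemma~\ref{lem:ttj} from $i=1$ to $i=n$ (the paper phrases this as an explicit induction on $i$), and soundness is argued by observing that \TTJ's output is a subset of binary hash join's output. The paper dispatches soundness in a single sentence (``any tuple produced by \TTJ is also produced by binary hash join''), whereas you spell out the underlying invariant; your more careful treatment is welcome, and your remark that deletions and backjumps only \emph{shrink} the set of recursive calls is exactly the content of that sentence.
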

\begin{proof}
We prove the correctness of \TTJ in two directions:
first, any tuple produced by \TTJ should be in the output;
second, \TTJ produces all tuples that should be in the output.
The first direction is straightforward, as any tuple produced by \TTJ
is also produced by binary hash join.
We prove the second direction by induction over the argument $i$,
with the following inductive hypothesis:
$\ttj(\pi_{[i-1]}(\tout), p, i)$ will be invoked 
for all $\tout \in Q$ and $1\leq i\leq |p|$. 
The base case when $i = 1$ holds because we start the execution of \TTJ
by calling $\ttj((), p, 1)$.
For the inductive step, assume $\ttj(\pi_{[i-1]}(\tout), p, i)$
is invoked, then applying Lemma~\ref{lem:ttj} shows
$\ttj(\pi_{[i]}(\tout), p, i+1)$ will also be invoked.
Therefore, $\ttj(\tout, p, |p|+1)$ will be invoked for all $\tout \in Q$,
which produces all tuples that should be in the output.
\end{proof}

Next, we prove \TTJ runs in linear time in the size of 
the input and output, for full acyclic queries.
We first introduce a condition on the query plan
that is necessary for the linear time complexity:
%


\begin{lemma}\label{lem:gyo}
Given a query $Q$ and a plan $p = [R_{1}, \ldots, R_{n}]$ for $Q$,
\lstinline|parent| returns \lstinline|None| only for $R_{1}$
during the execution of \TTJ,
if $p$ is the reverse of a GYO reduction order of $Q$.
\end{lemma}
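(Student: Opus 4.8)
The statement I want to establish is that when the plan $p = [R_1, \ldots, R_n]$ is the reverse of a GYO reduction order $[R_n, R_{n-1}, \ldots, R_1]$, the call \lstinline|parent(plan[1..i], R_i)| on line~\ref{lst:ttj-parent} returns \lstinline|None| only when $i = 1$, and returns an actual atom whenever $i > 1$. The key observation is that \lstinline|parent(plan[1..i], R_i)| computes the parent of $R_i$ within the subquery $\{R_1, \ldots, R_i\}$ (the prefix of the plan up to and including $R_i$). So I need to show: for every $i$ with $2 \le i \le n$, the atom $R_i$ is an ear of the subquery $Q_i := R_1 \bowtie \cdots \bowtie R_i$, i.e., there exists $j < i$ with $\bm{x}_j \supseteq \keys(Q_i, R_i)$.

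First I would unwind the definitions. By \cref{def:gyo}, since $[R_n, \ldots, R_1]$ is a GYO reduction order, for every index $m$ with $m > 1$ (writing the order as $[R_{q_1}, \ldots, R_{q_n}]$ with $q_t = R_{n+1-t}$), the atom $R_{q_t}$ is an ear in the suffix subquery $R_{q_t} \bowtie R_{q_{t+1}} \bowtie \cdots \bowtie R_{q_n}$. Translating back to the plan indices: $R_i$ is an ear in the subquery consisting of $R_i$ together with all atoms appearing \emph{before} $R_i$ in the plan, namely $\{R_1, \ldots, R_i\} = Q_i$ — which is exactly the subquery on which \lstinline|parent(plan[1..i], R_i)| operates. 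So being an ear of $Q_i$ gives us some atom $R_j \in Q_i$, $j \neq i$, with $\bm{x}_j \supseteq \keys(Q_i, R_i)$, and that $R_j$ is what the \lstinline|parent| routine (Figure~\ref{fig:parent}) returns (it scans $Q_i - \{R_i\}$ and returns the first atom whose schema contains $\keys(Q_i, R_i)$; such an atom exists, so it does not fall through to \lstinline|return None|). For $i = 1$, the subquery is the single atom $R_1$, so $Q - \{R\}$ is empty and \lstinline|parent| returns \lstinline|None| by its first line; this is the one allowed case. This direction is essentially bookkeeping between the GYO reduction order (suffix-indexed) and the plan (prefix-indexed).

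The one subtlety I must be careful about — and I expect this to be the main obstacle — is that an ear of $Q_i$ needs a parent \emph{inside} $Q_i$, whereas the GYO reduction order only guarantees $R_i$ is an ear of the suffix subquery $\{R_i\} \cup \{\text{atoms before } R_i \text{ in the plan}\}$; I need these to be the same set, which they are by construction of "reverse", but I should double check the edge direction and that \cref{def:gyo}'s condition "for every $i < n$" (in the reduction-order indexing) indeed covers all plan indices $i \ge 2$ and not, say, all $i \ge 1$. Concretely: the reduction order has its condition for positions $1$ through $n-1$; position $n$ in the reduction order (the last ear removed, which has no parent) corresponds to plan position $1$. So plan positions $2, \ldots, n$ are exactly covered, matching the claim. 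A secondary point worth a sentence: I should note that \lstinline|parent(plan[1..i], R_i)| as invoked by \TTJ at recursive level $i$ really is called with the first $i$ relations of the plan — this is visible directly from line~\ref{lst:ttj-parent} of Figure~\ref{fig:ttj}, so no extra argument is needed, but it is the hinge that makes the whole correspondence work.

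Putting it together: fix any $i$ with $2 \le i \le n$. Since $p$ reverses a GYO reduction order, $R_i$ is an ear of $Q_i = R_1 \bowtie \cdots \bowtie R_i$ by \cref{def:gyo} and \cref{def:ear}, so there is a $j \in [i-1]$ with $\bm{x}_j \supseteq \keys(Q_i, R_i)$; hence the \lstinline|parent| routine, scanning $Q_i - \{R_i\}$, finds such an atom and returns it rather than \lstinline|None|. For $i = 1$, \lstinline|parent| receives the one-atom query and returns \lstinline|None|. No other behavior is possible, which is the claim. I would not need any case analysis beyond $i = 1$ versus $i \ge 2$, and no induction — the GYO reduction order already packages the inductive content.
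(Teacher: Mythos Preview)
Your proof is correct and follows essentially the same idea as the paper's: both exploit that reversing a GYO reduction order guarantees each $R_i$ with $i\ge 2$ is an ear of the prefix subquery $\{R_1,\ldots,R_i\}$ and hence has a parent there. The paper phrases this via the induced join tree (``there is a join tree with $R_1$ as root, and every non-root atom has a parent''), whereas you unwind Definition~\ref{def:gyo} directly; the content is the same, and your version is just more explicit about the index bookkeeping.
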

\begin{proof}
 If $[R_{n}, \ldots, R_{1}]$ is a GYO reduction order,
 then there is a join tree with $R_{1}$ as root, 
 and every non-root atom has a parent. 
\end{proof}
\TTJ is guaranteed to run in linear time given such a plan:
\begin{theorem}\label{thm:linear}
Fix a query $Q$ and a plan $p$. If $p$ is the reverse of 
a GYO reduction order for $Q$, then \lstinline|ttj((),p,1)|
computes $Q$ in time $O(|Q| + \sum_i |R_i|)$.
\end{theorem}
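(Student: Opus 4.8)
The plan is to bound the running time by counting invocations of \ttj. We may precompute \lstinline|parent| for every plan position in $O(|Q|)$ total time; each invocation then does $O(1)$ work beyond its recursive calls and the body of its loop \lstinline|for r in R[k]|; and the number of invocations is one plus the number of loop-body executions. So it suffices to bound the number of loop-body executions.

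The first ingredient is \cref{lem:gyo}: because $p$ is the reverse of a GYO reduction order, \lstinline|parent(plan[1..i],R_i)| is non-\lstinline|None| for every $i\geq 2$, so a failed probe $R_i[k]$ at a non-root level always backjumps (line~\ref{lst:backjump}) rather than silently advancing its loop. A \lstinline|return P| is caught exactly once --- at the loop level of $P$ (line~\ref{lst:ttj-catch}) --- where the offending tuple is deleted (line~\ref{lst:ttj-delete}), and by \cref{lem:delete} this deletion never removes a tuple needed for the output. Hence failed probes (apart from the trivial case $R_1=\emptyset$) are in bijection with deletions, and since relations only shrink there are at most $\sum_i|R_i|$ of each.

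Next I would classify each loop-body execution, at a call $\ttj(t,p,i)$ iterating over $r\in R_i[k]$, by the outcome of its recursive call $\ttj(t\concat r,p,i+1)$: (i) the subtree reaches \lstinline|print| and emits an output tuple; (ii) the call returns \lstinline|result==R|, triggering a deletion here; (iii) the call returns a relation $\neq R$, which this body merely forwards upward; or (iv) the call returns \lstinline|None| with no output in its subtree --- the ``sterile'' case. For (i), each output tuple $\tout$ is emitted along a single root-to-leaf path of the recursion tree, determined by $\tout$ and of depth $\leq|Q|$, so there are at most $|Q|\cdot|\OUT|$ type-(i) bodies. For (ii), one per deletion, hence at most $\sum_i|R_i|$. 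For (iii), the backjump being forwarded originates at one failed probe at some level $m$ and is forwarded only by the fewer than $|Q|$ bodies sitting strictly between the parent's level and $m$, so at most $|Q|\cdot\sum_i|R_i|$ in total. For (iv), every leaf of the recursion tree sits at level $n$, where a \emph{successful} probe already produces output; so inside a sterile subtree every branch must terminate in a \emph{failed} probe, and since the call returns \lstinline|None| rather than forwarding a backjump, all those failures are caught within its own subtree, each causing a deletion there. Thus every type-(iv) call is an ancestor of some deletion-triggering call, and as each deletion-triggering call has at most $|Q|$ ancestors there are at most $|Q|\cdot\sum_i|R_i|$ of them.

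Summing the four cases bounds the number of \ttj invocations, and thus the running time, by $O\!\left(|Q|\cdot\bigl(\sum_i|R_i|+|\OUT|\bigr)\right)$, which for a fixed query is linear in the input and the output. I expect the main obstacle to lie in cases (iii) and (iv): one must verify that a failed probe whose backjump passes through several partially-consumed loops is charged only once, and that ``a sterile \lstinline|None|-returning call is an ancestor of a deletion'' survives degenerate situations such as an initially empty relation or a hash bucket that becomes empty mid-iteration. The reverse-GYO hypothesis --- through \cref{lem:gyo} --- is exactly what rules out a failed probe that neither backjumps nor deletes, the phenomenon responsible for the $\Omega(N^3)$ cost of plain binary join in \cref{ex:main}.
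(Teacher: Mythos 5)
Your proposal is correct and follows essentially the same route as the paper's own proof: bound the number of \lstinline|ttj| invocations by charging constant-depth chains of recursive calls to terminal events, namely output tuples (at most $|\OUT|$) and deletions (at most $\sum_i|R_i|$, since each failed non-root probe backjumps and causes exactly one deletion, which Lemma~\ref{lem:gyo} guarantees under the reverse-GYO hypothesis). The paper states this as a terse three-outcome classification of each call; your four-case analysis of loop bodies --- in particular the explicit treatment of backjump forwarding and of sterile \lstinline|None|-returning subtrees, which must bottom out in a caught backjump --- is a more careful elaboration of the same argument rather than a different one.
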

\begin{proof}
We first note that in Figure~\ref{fig:ttj},
\lstinline|ttj| does constant work outside of the loops; 
each iteration of the loop also does constant work 
and recursively calls \lstinline|ttj|,
so each call to \lstinline|ttj| accounts for constant work,
therefore the total run time is linear in the number of calls to 
\lstinline|ttj|.
All we need to show now is that there are a linear number 
of calls to \lstinline|ttj|.

Because $p$ is the reverse of a GYO reduction order for $Q$,
the following holds from Lemma~\ref{lem:gyo}:
except for the one call to \lstinline|ttj| 
on the root relation (when $i = 1$),
every call to \lstinline|ttj|
has 3 possible outcomes:
\begin{enumerate*}
  \item It outputs a tuple.\label{enum:output}
  \item It backjumps and deletes a tuple from an input relation.\label{enum:backjump}
  \item It recursively calls \lstinline|ttj|.\label{enum:recursive}
\end{enumerate*}
Because the query plan has constant length, 
there can be at most a constant number of recursive calls
to \lstinline|ttj| (case~\ref{enum:recursive}) until we reach cases~\ref{enum:output} or~\ref{enum:backjump}.
Therefore there are at most $O(|Q| + \sum_i |R_i|)$ calls to \lstinline|ttj|,
and the algorithm runs in that time.
\end{proof}
By Theorem~\ref{thm:gyo} every $\alpha$-acyclic query can be GYO-reduced,
therefore \lstinline|ttj| runs in linear time:
\begin{corollary}
For any $\alpha$-acyclic query $Q$, 
there is a plan $p$ such that \lstinline|ttj((),p,1)| computes $Q$ in time $O(|Q| + \sum_i |R_i|)$.
\end{corollary}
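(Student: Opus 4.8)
The plan is to obtain this corollary by chaining the two results already in hand; no new argument is really needed. First I would invoke Theorem~\ref{thm:gyo}: since $Q$ is $\alpha$-acyclic, it admits a GYO reduction order, say $\sigma = [R_{p_1}, \ldots, R_{p_n}]$. This is the only place acyclicity enters the proof, and it does all the work of supplying a ``good'' ordering of the atoms.

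Next I would define the plan $p$ to be the reverse of $\sigma$, i.e., $p = [R_{p_n}, \ldots, R_{p_1}]$. Here I would note, for well-definedness, that $p$ is a legal left-deep linear query plan for $Q$ in the sense of Definition (Query Plan) — it is a permutation of $Q$'s relations — so that $\ttj((),p,1)$ is meaningful. Then I would apply Theorem~\ref{thm:linear} directly to this $p$: because $p$ is by construction the reverse of a GYO reduction order for $Q$, the theorem gives both that $\ttj((),p,1)$ computes $Q$ and that it does so in time $O(|Q| + \sum_i |R_i|)$. Since Theorem~\ref{thm:linear} already bundles correctness with the running-time bound, existentially quantifying over $p$ yields exactly the statement of the corollary.

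There is essentially no obstacle: the corollary is Theorem~\ref{thm:linear} with its hypothesis on $p$ discharged via Theorem~\ref{thm:gyo}. The one point deserving a sentence of care is the degenerate case where $Q$ contains Cartesian products, in which the GYO algorithm may return a forest rather than a single tree and hence no single GYO reduction order in the strict sense of Definition~\ref{def:gyo}. As the paper has already declared such cases out of scope, I would simply restrict attention to connected $Q$ (equivalently, process each connected component independently), so that a genuine GYO reduction order — and therefore a suitable plan $p$ — exists.
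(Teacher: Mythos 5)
Your proposal is correct and matches the paper's own argument, which likewise obtains the corollary by combining Theorem~\ref{thm:gyo} (acyclicity yields a GYO reduction order) with Theorem~\ref{thm:linear} applied to the reversed order. The extra remark about Cartesian products is a reasonable bit of care but not needed beyond the paper's existing disclaimer.
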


\subsection{Comparison with Binary Join and \YA}\label{sec:comparison}
We now prove our claim that,  for any given query plan, \TTJ always matches or outperforms binary hash join.
Because \TTJ and hash join build the exact same set of hash tables,
they share the same cost for hash building.
We therefore focus on the cost of hash lookups which accounts for
the majority of the remaining cost for both algorithms.
The following proofs take advantage of set semantics,
but it is easy to extend the reasoning for bag semantics,
as we can convert a bag into a set by appending a unique
labeled null value to each tuple.
We start with the following observation to relate the run time
of hash join and \TTJ to the set of arguments they are invoked with:
\begin{lemma}\label{lem:distinct}
Both hash join and \TTJ, as defined in Figure~\ref{fig:binary-join} and Figure~\ref{fig:ttj},
are invoked once for each distinct combination of the arguments $(t, p, i)$.
\end{lemma}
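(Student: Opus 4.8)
The plan is to prove the statement by induction on the level argument $i$, with the plan $p$ held fixed (neither algorithm ever mutates $p$). The inductive hypothesis is: for every fixed tuple $t$, at most one invocation of the form $\ttj(t,p,i)$ (resp. $\mathtt{join}(t,p,i)$) occurs during the run. The base case $i=1$ is immediate: execution starts with the single call $\ttj((),p,1)$, and every recursive call increments $i$, so no other call is ever made with $i=1$.

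For the inductive step I would first record a structural invariant in the spirit of Proposition~\ref{prop:ttj-calls}: in any invocation at level $i$ the argument $t$ has schema exactly $\bigcup_{j\in[i-1]}\Sigma(R_j)$. This follows by a trivial induction from $\Sigma(()) = \emptyset$ together with the fact that a recursive call appends a tuple $r\in R_i[k]$ with $\Sigma(r)=\Sigma(R_i)$, where $R_i = p[i]$. The key consequence is that the argument $t'$ of a level-$(i+1)$ call uniquely determines the call that generated it: since $t' = t\concat r$ and $r$ agrees with $t$ on the shared key schema $\keys(p[1..i],R_i)$, we have $t = \pi_{[i-1]}(t')$ and $r = \pi_{R_i}(t')$. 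So the generating level-$i$ argument and the loop element used are both functions of $t'$ alone.

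Combining this with the inductive hypothesis finishes the step. Given $t'$, the only level-$i$ invocation that could possibly produce a call $\ttj(t',p,i+1)$ is $\ttj(\pi_{[i-1]}(t'),p,i)$, and by the inductive hypothesis that invocation occurs at most once. Within that single invocation, the loop ranges over the set $R_i[k]$; under set semantics its elements are distinct, so the element $r = \pi_{R_i}(t')$ is enumerated in at most one iteration, hence produces at most one recursive call carrying argument $t'$. (An early backjump that returns $P$ before the loop only removes calls, and for \TTJ the loop body may delete elements of $R_i[k]$, but deletion is monotone — it can only remove elements, never revisit them — so each surviving tuple is still yielded at most once.) Therefore $\ttj(t',p,i+1)$ is invoked at most once, and the induction goes through; the argument for $\mathtt{join}$ is the same special case with no deletions and no backjumps.

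The main obstacle I anticipate is making the "no revisits" reasoning fully airtight for \TTJ, since it deletes from $R_i[k]$ while iterating over it. I expect to discharge this by appealing to the intended iteration semantics (each tuple currently present is produced at most once) rather than to any particular implementation, and by the monotonicity of deletion; everything else is routine bookkeeping about schemas and projections.
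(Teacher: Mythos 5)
Your proof is correct and follows essentially the same route as the paper's: induction over $i$, with the key observation that $R_i[k]$ is a set so each $r$, and hence each $t\concat r$, is distinct, and that \TTJ only recurses on a subset of those tuples. You are somewhat more careful than the paper in making explicit that $t=\pi_{[i-1]}(t')$ and $r=\pi_{R_i}(t')$ are recoverable from $t'$ (so distinct parent invocations cannot collide) and in addressing deletion-during-iteration, but these are refinements of the same argument rather than a different approach.
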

\begin{proof}
We prove by induction over the argument $i$.
In the base case when $i$ = 1, both algorithms are invoked once with 
$t = ()$, $i = 1$.
For the inductive step, first consider the hash join algorithm.
For every distinct $t$, \lstinline|join(t, plan, i)| recursively calls 
\lstinline|join(t$\concat$r, plan, i+1)| for every $r \in R_i[k]$.
Since $R_i$ is a set, each $r$ is distinct, so each $t\concat r$ is also distinct.
The same reasoning also applies to \TTJ, as the algorithm will call itself only
for a subset of the tuples in $R_i[k]$.
\end{proof}
In other words, the number of calls to each algorithm is the same
as the number of distinct arguments they are invoked with.
We can now compare the algorithms, by bounding the number of calls
to \TTJ by that of binary join.
\begin{theorem}\label{thm:ttj-bj}
Given a query $Q$ and a plan $p$ for $Q$,
computing $Q$ with \TTJ using $p$ makes at most as many hash lookups
as computing $Q$ with binary join using $p$.
\end{theorem}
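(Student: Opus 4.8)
The plan is to convert the comparison into a counting argument built on Lemma~\ref{lem:distinct}. Fix the plan $p = [R_1,\ldots,R_n]$ throughout, and for each level $i$ let $A_i^{\HJ}$ (respectively $A_i^{\TTJ}$) be the set of tuples $t$ such that binary hash join (respectively \TTJ) is invoked with arguments $(t,p,i)$ during its run. By Lemma~\ref{lem:distinct} each algorithm is invoked exactly once per element of each $A_i$, and every non-terminal invocation --- one with $i \le n$ --- performs exactly one hash lookup, namely the probe of the bucket $R_i[k]$ where $k = \pi_{K_i}(t)$ and $K_i = \keys(p[1,\ldots,i],R_i)$; in \TTJ this bucket is read both in the emptiness test on line~\ref{lst:ttj-lookup} and in the loop on line~\ref{lst:ttj-loop}, but this is one logical probe that a real implementation caches. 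Hence the number of hash lookups performed is $\sum_{i \le n}|A_i^{\HJ}|$ and $\sum_{i \le n}|A_i^{\TTJ}|$ respectively, and it suffices to prove $A_i^{\TTJ} \subseteq A_i^{\HJ}$ for every $i$.

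I would prove this containment by induction on $i$. The base case $i = 1$ holds because both algorithms start with the single invocation on $t = ()$, so $A_1^{\TTJ} = A_1^{\HJ} = \{()\}$. For the inductive step, fix $t \in A_i^{\TTJ}$ and put $k = \pi_{K_i}(t)$; note that $k$ is determined by $t$ and $p$ alone, independently of the current contents of any relation, so \TTJ and \HJ probe the \emph{same} bucket of $R_i$ for this $t$. The invocation $\ttj(t,p,i)$ makes recursive calls only of the form $\ttj(t\concat r, p, i+1)$ with $r$ drawn from $R_i[k]$, and since \TTJ only ever deletes tuples --- never inserts one and never moves one to another bucket --- the set of such $r$ is a subset of the original bucket $R_i[k]$. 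On the other hand \HJ, invoked on $(t,p,i)$ by the inductive hypothesis, recurses on $t\concat r$ for \emph{every} $r$ in the original $R_i[k]$. Therefore every recursive call \TTJ makes at level $i+1$ is also made by \HJ at level $i+1$, that is $A_{i+1}^{\TTJ} \subseteq A_{i+1}^{\HJ}$. Summing these containments and comparing with the first paragraph yields the theorem.

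The step I expect to demand the most care is the monotonicity of \TTJ's mutable hash tables that underpins the inductive step: I must verify that, however deletions interleave with the recursion --- in particular a deletion from $R_i$ triggered by a backjump that lands exactly at level $i$ while $\ttj(t,p,i)$ is still iterating over $R_i[k]$, as well as the case where $\ttj(t,p,i)$ backjumps immediately without recursing at all --- the set of $r$ on which $\ttj(t,p,i)$ actually recurses remains contained in the original bucket $R_i[k]$, and no deletion ever changes \emph{which} bucket a later probe reads. Once it is pinned down that deletions only shrink buckets, never relocate tuples, and never affect the computed keys, both the induction and the concluding summation are routine.
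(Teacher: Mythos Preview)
Your proposal is correct and follows essentially the same argument as the paper: reduce the comparison of hash lookups to a comparison of invocation counts via Lemma~\ref{lem:distinct}, then prove by induction on the level $i$ that the set of arguments on which \TTJ is invoked is contained in that for binary join, using the fact that \TTJ only ever deletes from buckets. Your version is slightly more explicit --- you name the invocation sets $A_i$ and flag the mutable-state subtlety --- but the structure and key ideas are identical to the paper's proof.
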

\begin{proof}
For clarity we have repeated the lookup \lstinline|R[k]| three 
times in Figure~\ref{fig:ttj},
but we really only need to look up once and save the result 
to a local variable for reuse.
Specifically, a pointer to \lstinline|R[k]| on line~\ref{lst:ttj-lookup}
can be used for the nullness check on the same line,
the loop on line~\ref{lst:ttj-loop},
as well as the deletion%
\footnote{Although the deletion occurs after a recursive function call,
the recursion has constant depth, 
so the pointer dereference has good temporal locality and is likely cheap.}%
on line~\ref{lst:ttj-delete}.
This way, every call to \lstinline|ttj| makes exactly one hash lookup.
Since the binary join algorithm in Figure~\ref{fig:binary-join}
also makes exactly one hash lookup per call,
it is sufficient to bound the number of calls to \lstinline|ttj|
by that of binary join.
By Lemma~\ref{lem:distinct}, it is sufficient to show the distinct arguments
\TTJ is invoked on is a subset of that for binary join.
We prove this by induction over the argument $i$.
When $i = 1$, both \TTJ and binary join are invoked with $t = ()$ and $i = 1$.
For the inductive step,
\lstinline|ttj(t, p, i)| recursively calls \lstinline|ttj(t$\concat$r, p, i+1)|
only if $r \in R_i[k]$,
which implies \lstinline|join(t, p, i)| will also call \lstinline|join(t$\concat$r, p, i+1)|
in binary join.
Therefore, every call to \TTJ is accounted for with a call to binary join.
\end{proof}
Another cost in query execution comes from accessing 
the matching tuples after a successful lookup,
and one can prove that \TTJ accesses no more tuples than binary join,
following the same reasoning as above.
Although backjumping and tuple deletion in \TTJ may in principle
carry an overhead, we will show in Section~\ref{sec:experiments}
that such an overhead is negligible as compared to the cost of hash lookups.
Finally, we note the above proof does not assume an acyclic query.
Section~\ref{sec:cyclic} analyzes the run time of \TTJ on cyclic queries.

While we guarantee \TTJ to always match binary join, we cannot make the same strong claim for \YA.
We will see in Section~\ref{sec:experiments} that \YA
performs better than \TTJ on some queries.
Here we analyze a few extreme cases for some intuition
of how \TTJ compares to \YA :
\begin{example}
  Consider a query where every tuple successfully joins,
  i.e., no lookup fails.
  In this case binary join and \TTJ behaves identically.
  However, \YA spends additional time futilely computing semijoins (without removing any tuple), 
  before following the same execution as binary join and \TTJ
  to produce the output.
\end{example}
\begin{example}
  The other extreme case is when a query has no output, 
  and \YA immediately detects this and stops.
  In fact Example~\ref{ex:main} is such a query:
  all \YA needs to do is the semijoin $T \ltimes U$,
  where it builds a (tiny) hash table for $U$
  and iterate over $T$ once to detect nothing joins.
  In contrast, although \TTJ also runs in linear time,
  it must build the hash table for all of $S$, $T$ and $U$.
\end{example}
%
\section{Optimizations}\label{sec:optimizations} Up until this section \TTJ has
been presented in foundational manner, requiring only minor changes to  \HJ.
Deep consideration of \TTJ reveals many opportunties for enhancement. We present
two direct optimizations of the \TTJ algorithm inspired by research in Constraint
Satisfaction. We name these the {\em deletion
propagation} and {\em no-good list} optimizations.
Deletion propagation is emodied in the TreeTracker algorithm~\cite{DBLP:journals/ai/BayardoM94} and we include it to examine its effectiveness on join evaluation. No-good list is also known as \emph{no-good recording}, which stems from the \emph{constraint learning} method in Constraint Satisfaction~\cite{Dechter2003}.

\paragraph{Deletion Propagation}
Recall that after a lookup failure, a backjump is executed and the offending tuple removed it from its relation based on the corresponding hash key.
There will be executions where all the tuples sharing that hash key are removed. Programatically in line~\ref{lst:ttj-delete}
in Figure~\ref{fig:ttj}
\lstinline|R[k]| becomes empty. If so 
any subsequent lookup, \lstinline|R[k]| will fail.
Instead of continuing execution, as defined so far, we  can immediately backtrack further to the parent of $R$ and {\em propagate} the deletion to $R$'s parent. Said optimization requires 
adding the single following line to the end of Figure~\ref{fig:ttj}:
%
\begin{lstlisting}[numbers=none]
    if R[k] is None & P is not None: return P
\end{lstlisting}
This optimization is not always beneficial. When there are no subsequent lookups to \lstinline|R[k]|
propagating the deletion is unnecessary and carries a small overhead.

\paragraph{No-Good List}
We had remarked in Section~\ref{sec:algorithm} that removing a tuple from the root relation
is pointless, as the same tuple would never be considered again.
However, any tuple in the root relation that shares the same values 
with an offending tuple over the key schema will also fail.
The no-good list optimization comprises adding that set of values to
a blacklist. Each tuple from the root relation is tested for membership in the blacklist.  Since membership in that list mean certain failure no further effort to join that tuple is necessary. This optimization requires three changes to Figure~\ref{fig:ttj}. 

First, the key values must be included as parameters and passed to the parent relation,  line~\ref{lst:backjump}:
\begin{lstlisting}[numbers=none]
    return (P, $\pi_R$(t))
\end{lstlisting}
When catching the backjump (line~\ref{lst:ttj-catch})
at the root relation, those key values are added to the blacklist:
\begin{lstlisting}[numbers=none]
    if result == (R, vals):
      if i == 0: no_good.add(vals) else: R[k].delete(r)
\end{lstlisting}

When iterating over the root relation, (after line~\ref{lst:ttj-loop}), each tuple is tested for membership in the no-good list and if present further processing is skipped.
\begin{lstlisting}[numbers=none]
    if i == 0 & r.matches(no_good): continue
\end{lstlisting}
%
%
The no-good list, $ng$, can be implemented as a hash table.  Suppose the root relation, $R$, has $m$ children $S_1, \dots, S_m$. The lookup key for $ng$ is $\langle S_i, \ell_i \rangle$ where $\ell_i$ is a set containing $\pi_{\keys(R,S_i)}(t)$ (called \emph{no-goods}) for a tuple $t$ from $R$ that caused a lookup failure at $S_i$ for $i \in [m]$. The impact of the no-good list is almost identical to semijoin reduction in \YA. The algorithmic difference is in lieu of a semijoin removing dangling tuples prior to the join, the $R$ tuples are checked against a collection of values accumulated on the fly and at anytime during execution are a subset of the contents of the complementary antijoin. Like \YA itself, the effectiveness of the no-good list depends on how much the argument is reduced and the size of the intermediate result. i.e. the semijoin and join selectivity.  
We demonstrate the trade-off through Star Schema Benchmark in \cref{sec:query-performance}.

\nop{Like deletion propagation, the benefit of a no-good list is data dependent. 
When the list gets big, maintaining and probing it may become more expensive than the lookups saved.}
\section{Empirical Results}\label{sec:experiments}


Since our primary contribution concerns the development of an algorithm that is both asymptotically optimal and is competative in practice w.r.t. wall clock time,
the primary goal of the empirical assessment is to compare the execution time of the algorithms in as controlled of an experiment as possible.  All three algorithms,  \TTJ,  binary hash join, \HJ, and \YA are implemented in the same Java query execution engine written from scratch. We are certain our
algorithm execution measurements do not make calls to methods outside of our execution environment.  Any data structure in our execution environment whose definition is impacted by the definition of a data structure outside of our Java execution environment is treated identically for all three algorithms. Where possible, code is reused across algorithm implementation. The source code of the implementation is available at \url{https://anonymous.4open.science/r/treetracker}.

Remaining aspects of query compilation and and DBMS implementation are ``borrowed'' from other DBMS implementations.
Query plans are an example of borrowing from other DBMS implementations. After loading a benchmark database instance and gathering catalog statistics left-deep linear query plans are determined by SQLite, and bushy plans by PostgreSQL.  The SQL EXPLAIN command elicits the plans from the DBMSs.  SQLite and PostgreSQL were chosen because of the topology of the plans their optimizers generate. The linear time guarantee only holds for left-deep linear plans that are consistant with a GYO reduction order of the query. All the left-deep plans produced by SQLite in our experiments are consistent with the GYO reduction requirement.

\nop{
The study spans the  direct comparison of \TTJ, \HJ and \YA and the impact of the two optimizations, independently and their interaction when both are integrated into the \TTJ algorithm. 
Each algorithm is evaluated on two different kind of plans: left-deep linear plans and bushy plans.
Each left-deep linear plan is produced by SQLite%

whose optimizer is designed to generate such plans,
and can also print out the plans in a machine-readable format.
Every left-deep plan we encountered can be reversed into a GYO order,
thus \TTJ and \YA are guaranteed to run in linear time.
Bushy plans are generated by PostgreSQL,
and each plan is decomposed into a sequence of left-deep linear plans
for execution by different algorithms.}

\textit{Workload.}  Our experments encompass left-deep plans, left-deep plans with optimizations integrated into the \TTJ algorithm, and bushy plans.
Only the acyclic join queries in three benchmarks were evaluated, the Join Ordering Benchmark (JOB)
\cite{Leis2015}, TPC-H \cite{TPC} (scale factor = 1), and the Star Schema Benchmark
(SSB) \cite{ONeil2009a} (scale factor = 1). Also omitted were single-relation
queries, and correlated subqueries. These criteria eliminated only 9 queries, all from TPC-H. Thus, the 113 JOB queries, the 13 SSB queries and
13 out of 22 TPC-H queries were assessed, for a total of 139 queries.

\textit{Environment.} Experiments were conducted on a single logical core of an
AMD Ryzen 9 5900X 12-Core Processor @ 3.7Hz CPU. The computer contained 64 GB of RAM, and a 1TB PCIe NVMe Gen3 M.2 2280 Internal SSD.  
All data structures are allocated from  JVM heap which was set to 20 GB. Since execution was otherwise identical for all algorithms under test, no techniques to reduce the overhead of memory allocation or garbage collection were exploited.
Measurements for each query and algorithm pair were orchestrated by JMH \cite{jmh} configured for 5 warmup forks and
10 measurement forks.  Each of those forks contains 3 warmup
and 5 measurement iterations. 

Direct measurements of PostgreSQL, which can be seen as the control group (not a baseline),  for our implementations are the same as
\cite{Leis2015}. PostgreSQL measurements use an in-memory hash join, indices were dropped and single process execution specified. Thus, we configured PostreSQL such that measurements were made as similar to our Java implementations as we could make possible. A timeout was set to 1 minute. Of all the queries only TPC-H Q20 exceeded the timeout.
 

\subsection{Algorithm Comparison}\label{sec:query-performance}

\begin{figure}
\centering
\begin{subfigure}[t]{.33\linewidth}
\includegraphics[width=\textwidth]{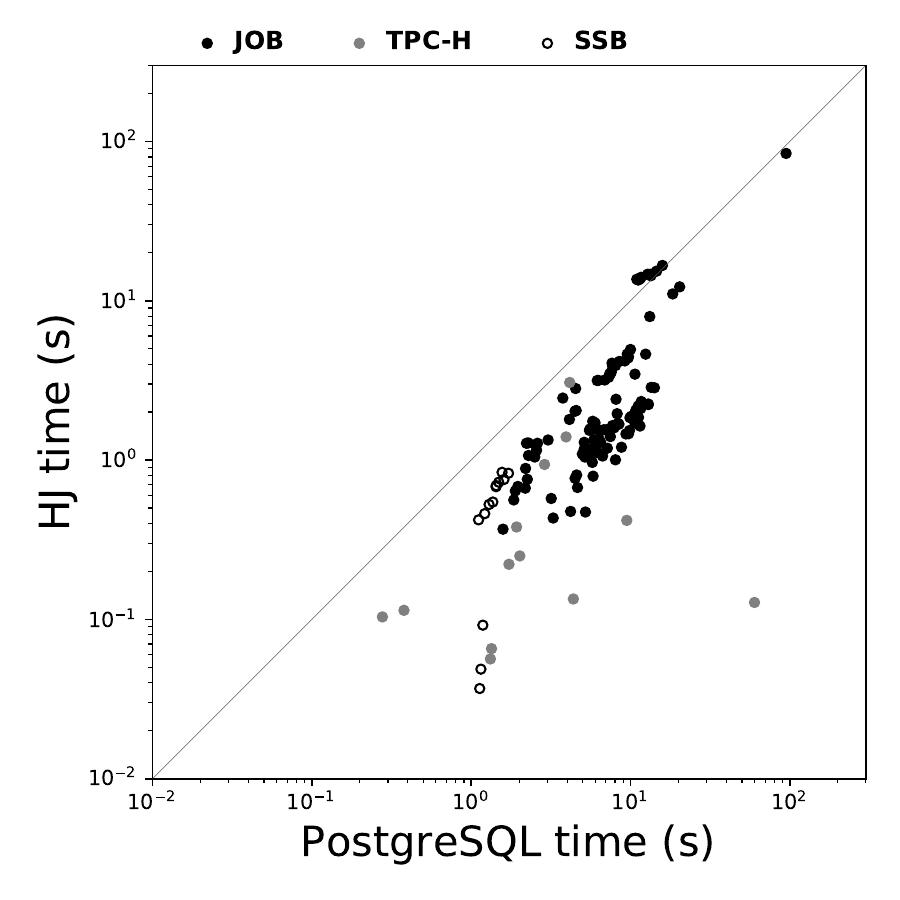}	
\caption{$\HJ$ vs. PostgreSQL}
\label{fig:hj-postgres}
\end{subfigure}%
\begin{subfigure}[t]{.33\linewidth}
\includegraphics[width=\textwidth]{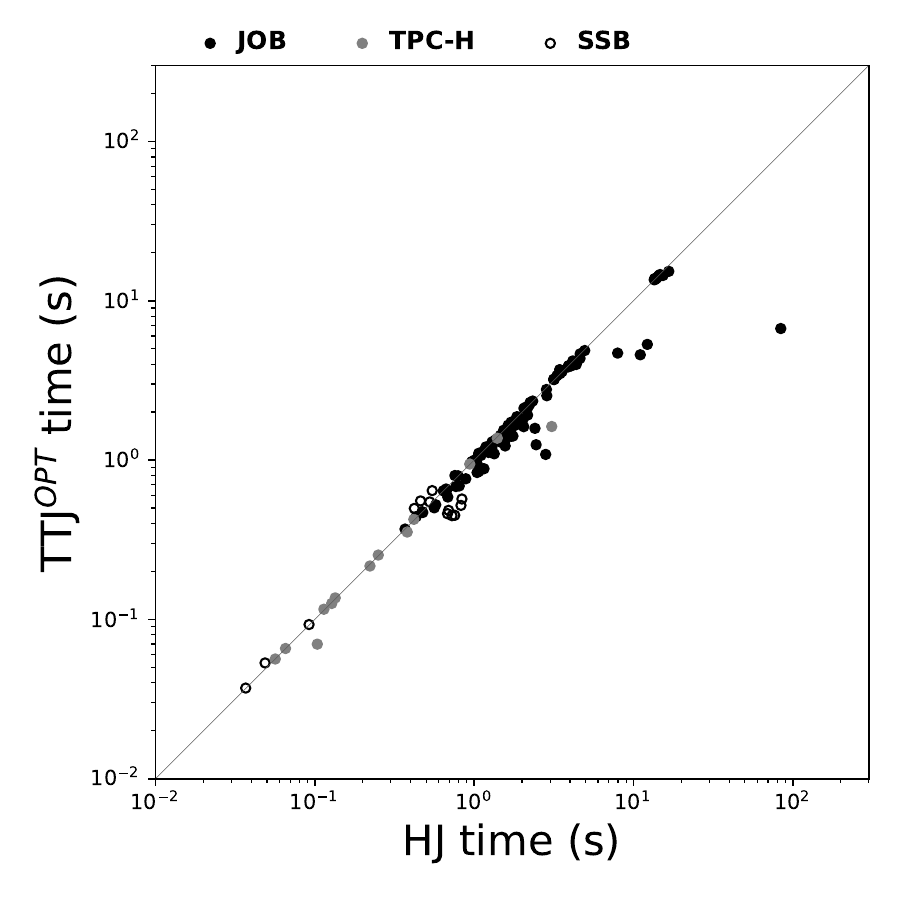}	
\caption{$\TTJ$ vs. $\HJ$}
\label{fig:ttj-hj}
\end{subfigure}%
\begin{subfigure}[t]{.33\linewidth}
\includegraphics[width=\textwidth]{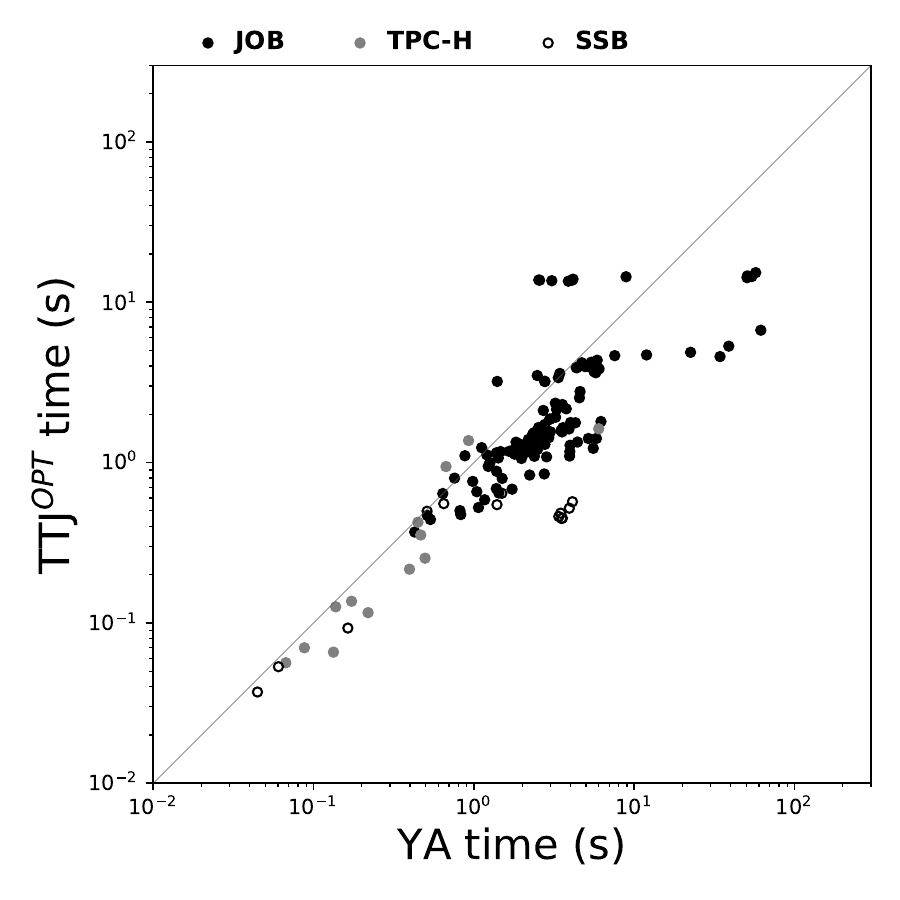}	
\caption{$\TTJ$ vs. $\YA$}
\label{fig:ttj-ya}
\end{subfigure}
\cprotect{\caption{Run time of $\TTJ$, $\HJ$, $\YA$, and PostgreSQL on JOB, TPC-H, and SSB.
Every data point corresponds to a query, whose $x$- and $y$-coordinates correspond to the run time 
of the algorithms under comparison.}}
\label{fig:job}
\end{figure}

\begin{table}
\begin{tabular}{c|c|c|ll|ll} \toprule
\multicolumn{1}{c|}{Baseline} & \multicolumn{1}{l|}{Benchmark} & \multicolumn{1}{c|}{Average} & \multicolumn{2}{c|}{Maximum} & \multicolumn{2}{c}{Mininimum} \\ \midrule
\multirow{3}{*}{Hash Join}  & {JOB}  & $1.11\times$  & $12.6\times$ & (16b) & $0.9\times$  & (11b)  \\
                            & {TPC-H} & $1.09\times$ & $1.9\times$ & (Q9) & $1\times$  & (Q7) \\
                            & {SSB} & $1.15\times$  & $1.7\times$ & (Q2.2) & $0.8\times$  & (Q3.4) \\ \hline

\multirow{3}{*}{\begin{tabular}{@{}c@{}}Yannakakis's\\Algorithm\end{tabular}} & {JOB} & $1.60\times$ & $9.2\times$ & (16b) &
$0.2\times$ & (6a) \\ & {TPC-H} & $1.40\times$ & $3.7\times$ & (Q9) &
$0.7\times$ & (Q7) \\ & {SSB} & $3.16\times$ & $7.9\times$ & (Q2.2) & $1\times$ & (Q3.4) \\ \bottomrule

\end{tabular}
\vspace{1em}
\caption{Speed-up of TreeTracker Join Relative to Hash Join and Yannakakis's Algorithm.}
\label{tab:benchmark-stats}
\end{table}

\cref{fig:job} illustrates our primary results. It contains 3 scatter plots that pairwise compare the execution time of 4 implementations for each query across the 3 benchmarks. First, \cref{fig:hj-postgres} compares the performance of PostgreSQL, using hash joins with our implementation 
of \HJ.   Inspection of the scatterplot shows that with few exceptions the execution time of the same query is less than an order of magnitude apart.  Most points are below the diagonal indicating our implementation is faster than PostreSQL. The shape of the cluster suggests a consistent range in the disparity of execution time.

Faster execution is not surprising. The results of a road race with PostgreSQL are not material to this paper. Our execution environment contains no elements of transaction system overhead or buffer and memory hierarchy management.  PostgreSQL execution time was measured as a control. This first plot establishes that our Java implementation is within range of a commercially used RDBMS and the consistency in the difference of execution speed lends credibility that the emperical results from our execution environment will generalize to commercially deployed RDBMSs.

\cref{fig:ttj-hj} shows, on a per query basis, the relative speed of \TTJ versus \HJ. The visualization in \cref{fig:ttj-hj} reveals that \TTJ is often faster than \HJ, and for just a few queries the execution is slower and when that is the case the performance disadvantage is marginal.  Per \cref{tab:benchmark-stats}, JOB query 11b and SSB query Q3.4, form the worst results for \TTJ are just 10\% and 20\% slower respectively.   The weighted average of \TTJ execution time over the three benchmarks is a hair better than 10\% faster.  More sizable improvements appear in the maximum speed-up results.   We remind the reader the join orders are for plans that were optimized for left-deep linears hash-joins. Below we will return to the question of the upside opportunity for \TTJ execution speed when, in future work, a SQL optimizer includes cost models for \TTJ and the optimization process includes both a choice of join order and a choice of join algorithm.  For instance the detailed examination of each query execution revealed that \TTJ's worst-relative performance, JOB query 11b is due to the inclusion of the no-good list optimization which, often predictably, incures overhead without providing any performance benefit. 

For completeness, performance of \TTJ relative to \YA is presented in \cref{fig:ttj-ya}.  The results exemplify the paradox and challenge of \YA.  On all but 12 queries, \TTJ outperforms \YA,
with average and maximum speedup of 1.4x and 9.2x.
8 of those queries, JOB 6a, 6b, 6c, 6d, 6e, 7b, 12b,
and TPC-H Q7,
exhibit the most significant disadvantage of \TTJ.

Review of the JOB queries reveals a foreseeable cause for \YA execution speed advantage.  The first semijoin removes a large
fraction of tuples from a large relation. For example, the first semijoin for JOB query 6a reduces the largest relation, $\imdbcastinfo$, from  36,000,000
tuples to 486 tuples. That semijoin is executed before
building the hash tables.  Hash table build time for \YA is  499ms. For \TTJ that build time is 13,398ms and by itself comprises 
98\% of the execution time  for \TTJ. 

The basis of TPC-H, Q7's performance results are also due to the impact of the first semijoin, but in a more involved way.  Prior to any join processing a relational select on $\tpchnation$ returns just 1 tuple.  As an argument to the first semijoin,   $\tpchsupplier \ltimes \tpchnation$,  over 90\% of
tuples from $\tpchsupplier$ are removed. Where, in the first example the one semijoin reduction accounted for speed benefit, in this example, by beginning with a single tuple, the entire chain of semijoin reductions resulted in large reductions in the size of the join arguments repeat.

Review of hash table build times for \YA relative to hash table build times for \TTJ and \HJ alone, (these latter two always being equal), accounts for all the speed improvement of \YA compared to the other algorithms.  

The specialized pattern embodied in  SSB, star queries on a star schema, enables a quantitative assessment that may be used in the future by a query optimizer.  Notably a determination if the integration of a no-good list is advantagous. For the special case of
big data queries modeled by SSB, the
performance of \TTJ  is largely
determined by the effectiveness of the no-good list.

Recall the no-good list is a specialization for the leftmost argument of a join plan as hash-joins do not typically create a hash-table for the leftmost argument. The no-good list forms a cache of the tuple key values for the leftmost argument that have been determined to be dangling.  Queries plans for star schema typically start with the fact table as the
leftmost relation in a plan, and provide the key values for a series of joins on dimension tables. Any lookup failure will backjump to the fact table and add to the no-good list. 
Thus the no-good list acts as a filter that prevents any
processing of a fact tuple whose join key values have already been determined to be
fruitless.  

\nop{probably delete:
and if sourced from the right hand argument of a binary join would be deleted from the hash table.  
The hash tables for the righthand join arguments enable \TTJ to remove a tuple as an argument to the query without impacting the base relations. }

Of 13 SSB queries evaluated $\TTJ$ is the fastest algorithm on 6, and, plus or minus,
within 10\% margin of the best algorithm on 10 queries. 
We compared the queries that run relatively slower
in \TTJ (Q1.2, Q3.4, Q4.1, and Q4.3)
with those that run relatively faster 
(Q2.1, Q2.2, Q2.3, Q3.1, Q3.2, and Q3.3), 
and measured the ratio between the
intermediate result size reduction and the size of no-good lists. We determined that for the slower queries, 
each element in the no-good list, on average, reduces the intermediate result size by
182. For the faster queries the average is 318. Although an optimizer is not within the scope of this paper, 
we can conclude that for our testbed  more refined measurements would determine a tipping point value of a selectivity that falls between $1/182$ and $1/318$. Selectivity below the tipping point indicates omitting the no-good list will result in faster query execution and vice versa.

The scatter plot \cref{fig:ttj-probe-hj} compares the number of hash probes for \TTJ vs. \HJ for each query. A small number of the scatter plot points appear on the diagonal, i.e. an equal number of hash probes. The remainder of the points are below the diagonal.
This emperically validates our theorem that 
\TTJ will execute fewer or an equal number of
hash probes as \HJ.

\nop{
cref{fig:ttj-probe-hj} also reveals there is often a large disparity in the number of hash probes for \TTJ and \HJ, yet relative execution times are much closer.  We attribute this to the fact that despite reducing the overhead of an optimimal join algorithm sufficient to generally outperform \HJ there are additional steps.  The test for an empty result from a hash probe is a conditional branch in the very center execution loop.  When that branch is taken a method is called to remove the dangling tuple.  In this regard we speculate that Java, as compared to C or C++, is a poor choice of implementation language for \TTJ, exasberated by our decision to not integrate any Java tricks that mitigate object-oriented or memory-management overhead.  The opportunities to do so are different for each algorithm.  Most troublesome wrt execution speed for implementation of \TTJ is that our backjump to delete a dangling tuple is implemented as a method call. It is possible to implement that in a manner that a JIT compiler would inline that method. In Java the subroutine stackframe is much more complicated than in C or C++. Even when using a well optimized JIT compiler that stack frame may contain parameters for dynamic method dispatch. Method arguments are passed in the stackframe not in registers. While the best Java execution environments optimize compute intensive workloads competatively with C or C++, only in special cases does is subroutine linkage not a handicap for Java.
}
\nop{
The reduction in hash-probes for \TTJ is a formal result. It remains that when a hash-probe executed by \HJ is avoided by \TTJ it is the consequence of a backjump and the instructions to remove the danging tuple. Thus, the second plot in \fig{??}. \fig{??bb}, compares the execution time of \TTJ and \HJ. 
does not provide 1-for-1 benefit 
\cref{fig:ttj-hj} plots the performance of \TTJ vs. \HJ, each point representing the execution time of a query.

but \TTJ has backjump and delete steps. \cref{fig:hash-probe} compares the number of hash probes in different algorithms,
where the empirical results support the theoretical guarantee that \TTJ makes no more hash probes than \HJ. 
It is not a given that \TTJ will be faster than \HJ. 
\cref{fig:ttj-hj}conveys that \TTJ is generally faster, but with close inspection it appears the centers of some dots in the plot are above the diagonal. They are. 

Thus, statistical data is also presented, see cref{tab:benchmark-stats}. Relative speed
in the min column for  \HJ rows show worst-case performance for \TTJ compared to \HJ is just 10 "percent" and 20 "percent" for the JOB and SSB queries, and no examples of slower speed for TPC-H.  A weighted (by the number of queries) average of the average speed-up is roughly 10 "percent". An outlier in maximum speed up is JOB query 16b at 12.6x speed-up.  

Note, for this critical measurement we did ourselves no favor by engaging with the rigor of a controlled experiment. Recall the join
order for the execution of \HJ and \TTJ are the same. That join order 
was determined by an optimizer whose objective was to determine the best join order of a composition of hash joins.  This will be further addressed below.

Ignoring that outlier and taking further cue from \cref{fig:ttj-hj} 
the average speed-up 

$\YA$, $\HJ$, $\TTJ$, and native PostgreSQL execution on the
JOB, SSB and TPC-H benchmarks.
\cref{fig:hj-postgres} establishes the competitive performance of our \HJ implementation against PostgreSQL,

and \cref{fig:ttj-ya} demonstrate the advantage of \TTJ over \HJ and \YA.
\cref{tab:benchmark-stats} gives the maximum, minimum, and average (geometric mean) speed-up of \TTJ versus \HJ and \YA, respectively on all three benchmarks.}
We have not made any claims as to the relative number of hash probes between \TTJ and \YA. Nevertheless we made that measurement. 
\cref{fig:hash-probe-ya}  shows like \HJ, \TTJ makes fewer hash probes for \YA. Yet \YA runs faster for certain queries as hash building,
not probing, sometimes dominates query run time, as we have pointed out
in the analysis of results in~\cref{fig:ttj-ya}.

\nop{
may be suprising. This is likely
The larger gaps between the algorithms in \cref{fig:hash-probe} relative to \cref{fig:job}
is also due to the fact that query execution spends more time building the hash tables
rather than probing them.
In the following, we analyze the results for each benchmark suite in detail.
}

\begin{figure}
\centering	
\begin{subfigure}[t]{.33\linewidth}
\includegraphics[width=\linewidth]{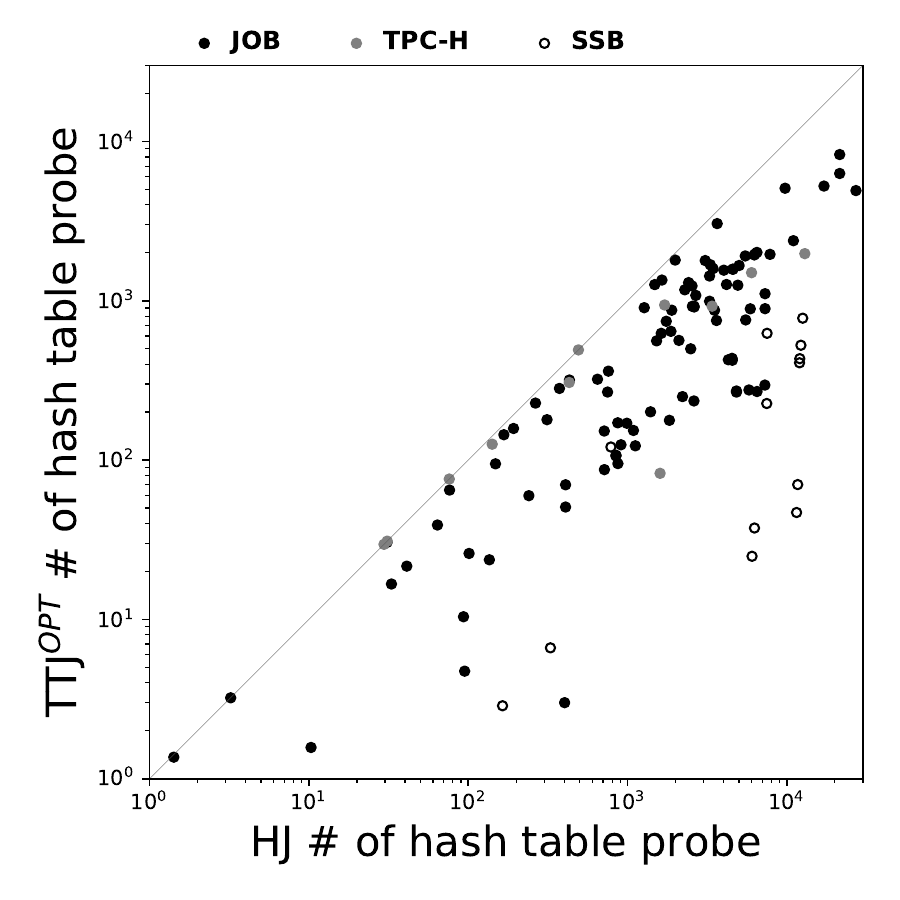}	
\caption{$\TTJ$ vs. $\HJ$}
\label{fig:ttj-probe-hj}
\end{subfigure}
\hspace{4em}
\begin{subfigure}[t]{.33\linewidth}
\includegraphics[width=\linewidth]{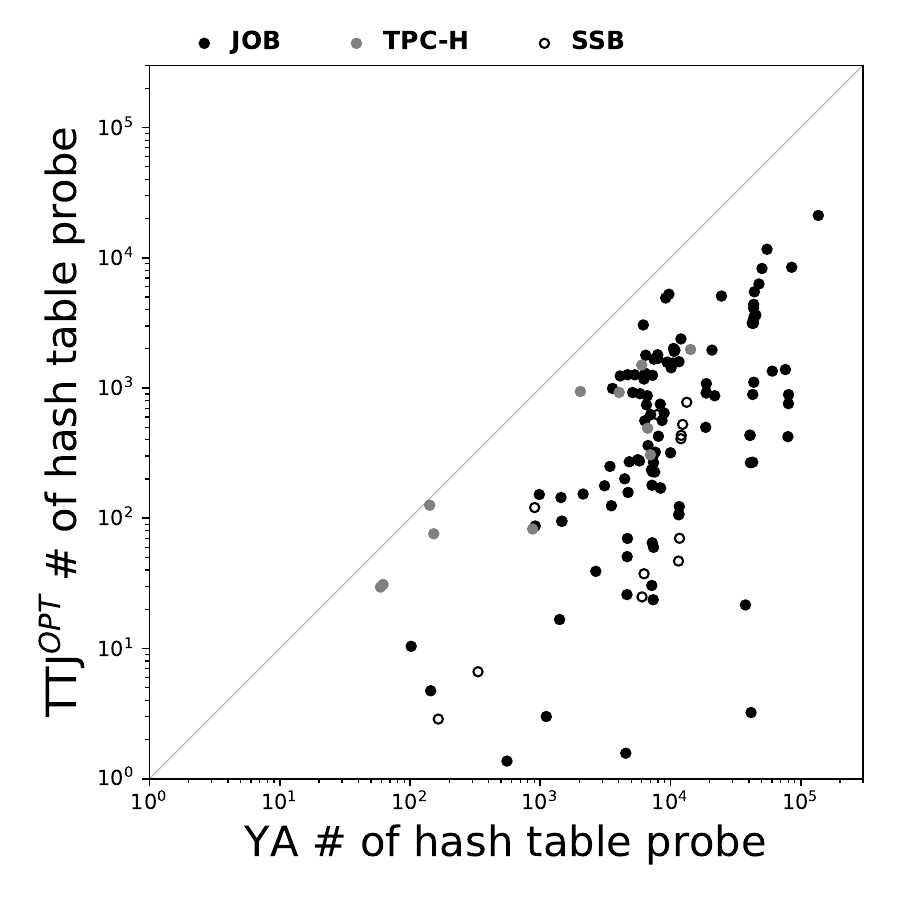}
\caption{$\TTJ$ vs. $\YA$}
\label{fig:hash-probe-ya}
\end{subfigure}
\caption{Number of hash probes in different algorithms.}
\label{fig:hash-probe}
\end{figure}

\nop{
\paragraph{JOB}%
Of all 113 queries, $\TTJ$ is the fastest algorithm
on 74 of them (65\%),
and it is within a 10\% margin of the best algorithm on 100 queries (88\%).
Compared to $\HJ$, the maximum speed-up is $12.6\times$
(16b), the minimum speed-up is $0.9\times$ (11b), and the average speed-up
(geometric mean) is $1.11\times$. Compared to $\YA$, the maximum speed-up is
$9.2\times$ (16b), the minimum speed-up  is $0.2\times$ (6a), and the average
speed-up is $1.60\times$.}%
%
\nop{
In rare cases where \TTJ is slower than \HJ, e.g. 11b, the
performance gap is due to the overhead of the no-good list optimization. We'll elaborate on this below when discussing the performance of SSB queries where the
issue is more visible. 
$\YA$ clearly outperforms $\TTJ$ and $\HJ$
 on queries 6a, 6b, 6c, 6d, 6e, 7b,
and 12b\nop{17a and 18a}. For these queries a semijoin reduction removes a large
fraction of tuples from a large relation. For \YA that reduction occurs before
building the hash tables. $\TTJ$ and $\HJ$ build all their hash tables before
computing the join. Review of hash-table build times accounts \YA's good showing
on these queries.  For example, the first semijoin for query 6a reduces the 36
million tuples of largest relation, $\imdbcastinfo$, to 486 tuples. The
corresponding hash table build times for \YA and \TTJ are 499ms and 13,398ms, 
respectively, which make up 16\% of
the total execution time for \YA and 98\% for \TTJ. 
}


\nop{SQLite runs significantly faster on queries 5a and 5b as a result of an
optimization we did not implement. These queries return no results.
Once SQLite detects the output 
will be empty it avoids building additional hash tables%
\footnote{SQLite uses B-trees instead of hash tables, 
but its documentation~\cite{sqliteoptoverview} treats them as roughly equivalent.
For brevity, we refer to SQLite's B-trees as hash tables.}.}


\nop{ 
\paragraph{TPC-H}%
Out of the 13 acyclic TPC-H queries, $\TTJ$ is the fastest algorithm on 6
(46\%) of them. In 6 of the rest 7 queries, \TTJ is slower than the best
algorithm within a 10\% margin.
When \YA outperforms \TTJ it is for similar reason as the case in JOB.  
Consider Q7: A fragment of \YA execution is the chain $\tpchorders \ltimes
(\tpchlineitem \ltimes (\tpchsupplier \ltimes \tpchnation))$. The first
semijoin $\tpchsupplier \ltimes \tpchnation$ already removes more than 90\% of
tuples from $\tpchsupplier$ because $|\tpchnation| = 1$ (after a selection). 
The largely reduced $\tpchsupplier$ speeds up the subsequent semijoin $\tpchlineitem \ltimes
\allowbreak\tpchsupplier$ and starts a chain reaction on the remaining
semijoins. As a result, \YA removes nearly all tuples in the input
relations in a small amount of time.

\paragraph{SSB}%
Out of 13 SSB queries $\TTJ$ is the fastest algorithm on 6 of them, and is
within a 10\% margin of the best algorithm on 10 queries. SSB provides an
opportunity to investigate the impact of a single use \TTJ's no-good list
optimization. Recall queries on star schema start with the fact table as the
leftmost relation in a plan, and provide join keys to the dimension tables.
Consequently,  any lookup failure will backjump to the fact table and mark a
tuple as no-good, and the algorithm will move on to the next tuple in that table. 
Thus the no-good list acts as a filter that prevents any
processing of a fact tuple whose join key values have already been determined to be
fruitless.  Performance of \TTJ on big data queries typified by SSB  is largely
determined by the effectiveness of the no-good list.
We compare the queries that run relatively slower
in \TTJ (Q1.2, Q3.4, Q4.1, and Q4.3)
with those that run relatively faster 
(Q2.1, Q2.2, Q2.3, Q3.1, Q3.2, and Q3.3), 
and measuer the ratio between the
intermediate result size reduction and the size of no-good lists. We observe that
each element in the no-good list on average reduces intermediate result size by
182 on the slower queries but by 318 on the faster ones, i.e., each no-good element in
the fast queries reduces more than 75\% intermediate result size of that in the
slow queries. Thus, under our implementation, if each no-good element can filter
out the tuples from the first relation in the plan such that those filtered
tuples can cause on average 318 intermediate result size reduction, the benefits
of the no-good list can outweigh its cost.
}

\subsection{Impact of Optimizations}\label{sec:optimization-impact}
Experiments in this section investigate the impact of the two optimizations
introduced in \cref{sec:optimizations}, no-good list and deletion propagation.
To denote \TTJ with both no-good list ($ng$) and deletion
propagation ($dp$) we write $\TTJNGDP$. To denote \TTJ  with no-good list only and \TTJ  with deletion propagation only we write  $\TTJNG$ and $\TTJDP$ respectively. For this section \TTJ shall mean the algorithm without the optimizations. \cref{fig:optimization-impact}
contains scatter plots that compare the runtime of \TTJ with each of the three possible integrations of the optimizations.

\begin{figure*}[!t]
	\begin{subfigure}{0.33\linewidth}
		\includegraphics[width=\textwidth]{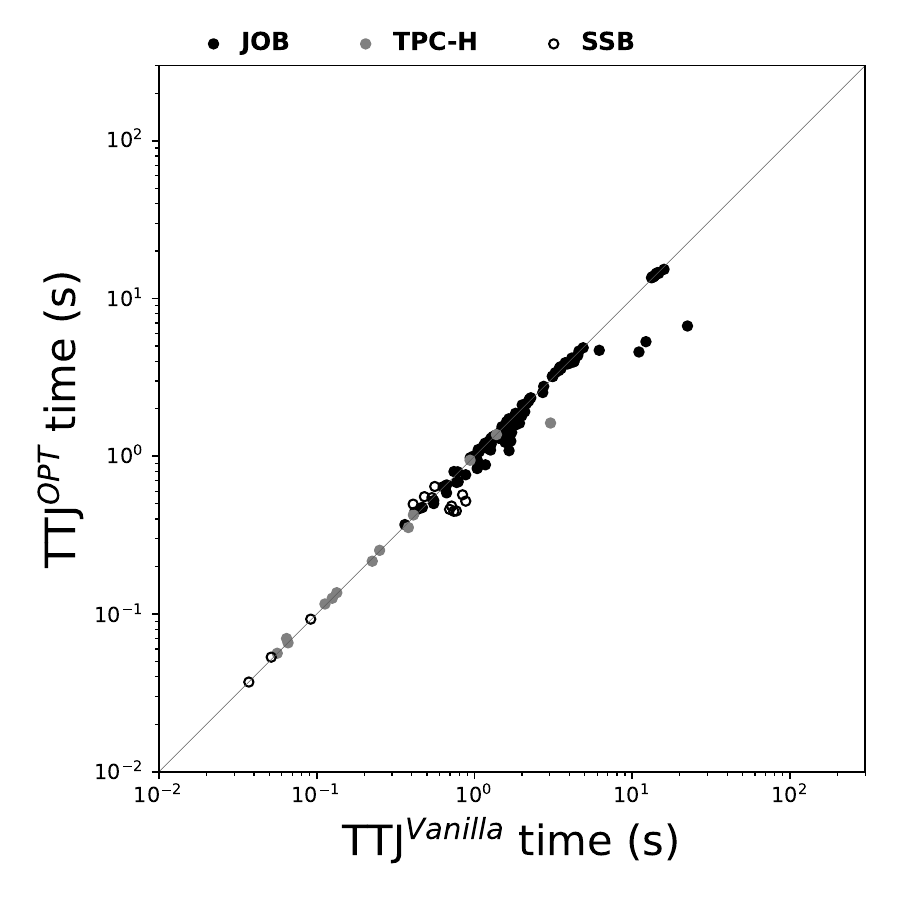}
		\cprotect\caption{Using both optimizations}
		\label{fig:ttj-dp-ng-ttj}
	\end{subfigure}%
	\hfill%
	\begin{subfigure}{0.33\linewidth}
		\includegraphics[width=\textwidth]{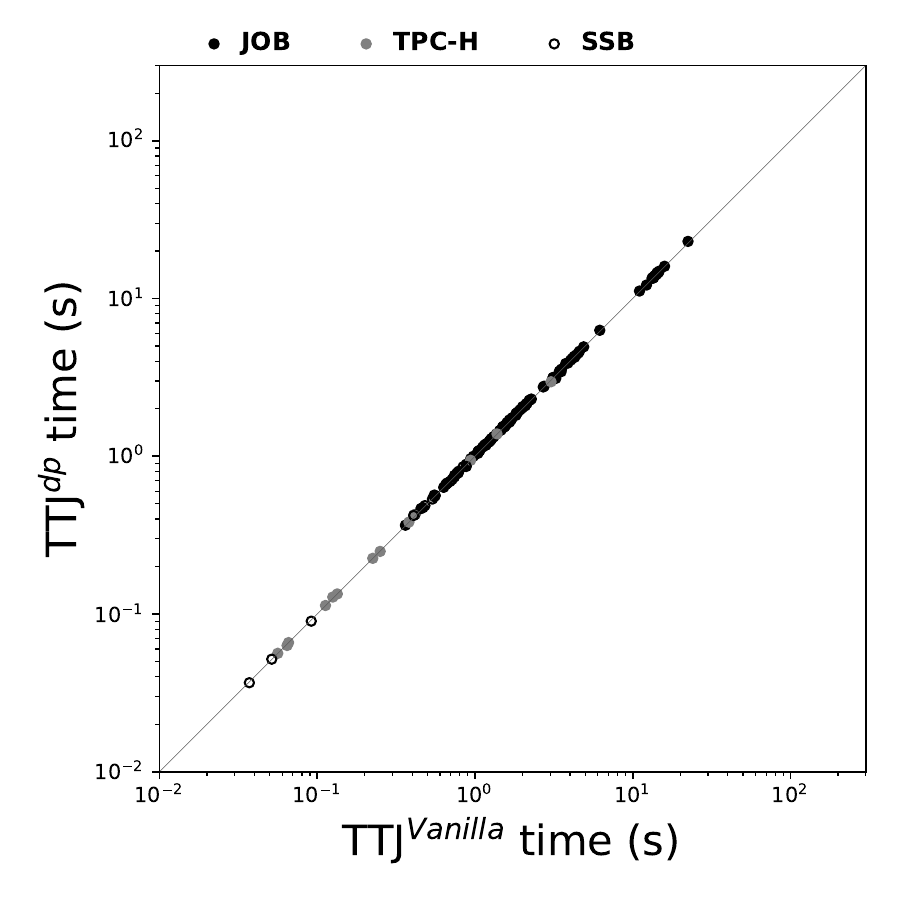}
		\cprotect\caption{Deletion propagation ($dp$)}
		\label{fig:ttj-dp-ttj}
	\end{subfigure}%
	\hfill%
	\begin{subfigure}{0.33\linewidth}
		\includegraphics[width=\textwidth]{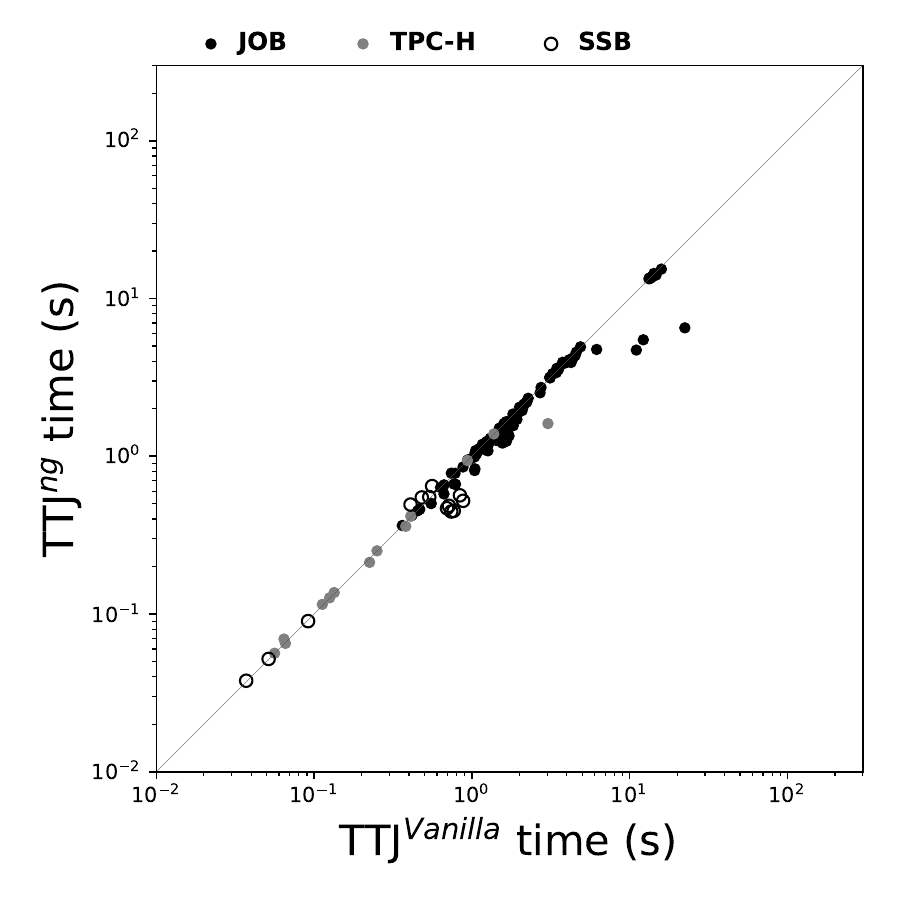}
		\cprotect\caption{No-good list ($ng$)}
		\label{fig:ttj-ng-ttj}
	\end{subfigure}
	\caption{Performance impact of \TTJ optimizations.}
	\label{fig:optimization-impact}
\end{figure*}

Reviewing the behavior of the optimizations independently, we see that deletion propagation 
has little effect on query run time.
This is due to the fundamental differences between constraint satisfaction and query evaluation.
In constraint satisfaction, the {\em modus operandi} is to backtrack upon a failed constraint,
and such failures are common due to the large number of constraints present.
In contrast, in query evaluation the number of constraints is small as the query
is usually much smaller than the input data,
and a large number of tuples will eventually satisfy all constraints and appear in the output.
Indeed, from running the entire benchmark suite, only around 5\% of deletions trigger propagation.

On the other hand, no-good list markedly improves the performance in a number of
queries. There are only a few queries slightly above the diagonal, all from the
SSB benchmark. Recall from our SSB analysis (\cref{sec:query-performance}) that
the effectiveness of the no-good list hinges on the ratio of intermediate result
size reduction to the no-good list size. For star schema queries, this ratio is
dominated by the no-good list's ability to filter tuples from the fact table—the
left-most relation in the join plan. 
Quantifying this, we observe that each no-good element reduces
intermediate results by 318 on average for fast queries (Q2.1, Q2.2, Q2.3, Q3.1,
Q3.2, Q3.3) versus 182 for slow queries (Q1.2, Q3.4, Q4.1, Q4.3). The fast
queries achieve a 75\% higher reduction per element, surpassing the threshold
where benefits outweigh costs. Crucially, this aligns with scenarios where the
join plan structure and backjump dynamics prioritize filtering the left-most
relation—a pattern common in efficient executions.

\subsection{\TTJ on Bushy Plans}\label{sec:bushy-plans}


\nop{\begin{figure}
	\begin{adjustbox}{addcode={\begin{minipage}{\width}}{\caption{%
						speed-up of  $\TTJ$ and $\HJ$ over PostgreSQL on all 113 JOB queries using native PostgreSQL plans.\label{fig:job-bushy}}\end{minipage}},rotate=90,center}
		\includegraphics[width=\textheight]{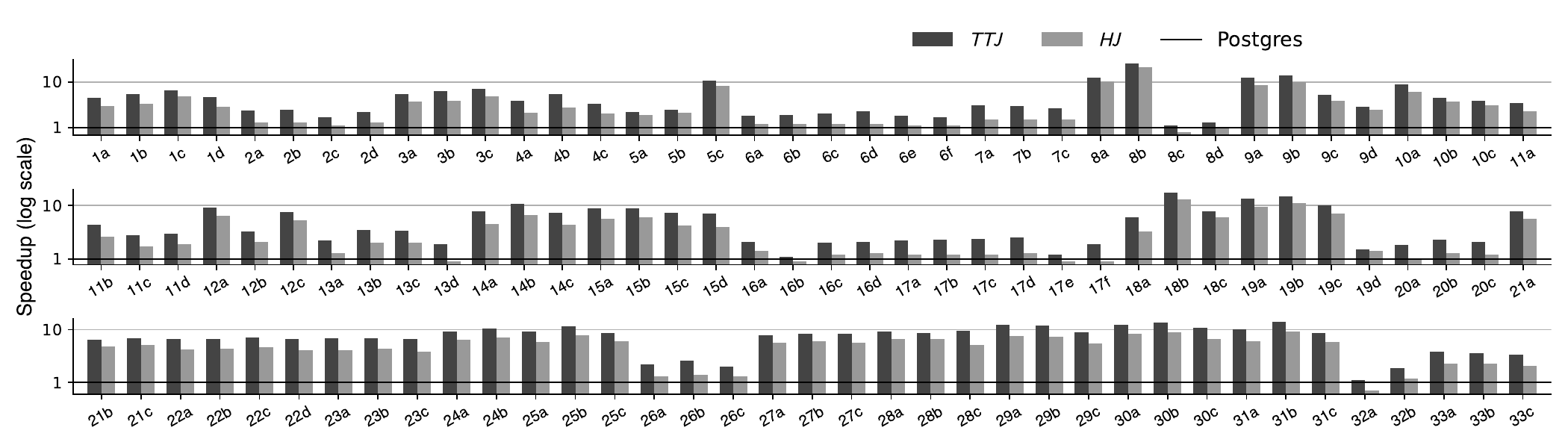}%
	\end{adjustbox}
\end{figure}

Since every plan in \cref{sec:query-performance} is the reverse of some GYO
reduction order, the runtime of \TTJ is guaranteed to be linear. However, given
the abundance of bushy plans, a natural question to ask is whether \TTJ can
still provide reasonably good performance despite the loss of linear runtime
guarantee. The results in this section give a postive answer.

\cref{fig:job-bushy}  shows the speed-up of \TTJ and \HJ relative to native PostgreSQL execution on JOB using PostgreSQL plans. Of 113 JOB queries, $\TTJ$ is the fastest algorithm on all of them. Compared to $\HJ$, the maximum speed-up is $25.8\times$ (8b), the minimum speed-up is $1.1\times$ (8c), and the average speed-up (geometric mean) is $4.69\times$. 
From the figure, we observe that despite loss of linear time guarantee, \TTJ still performs well on the bushy plans. This shows that the performance improvement using \TTJ on each indivudal left-deep plan has the compound effect that contribute to the overall performance improvement of the queries.}

\begin{figure}
	\begin{subfigure}[t]{.33\linewidth}
		\includegraphics[width=\linewidth]{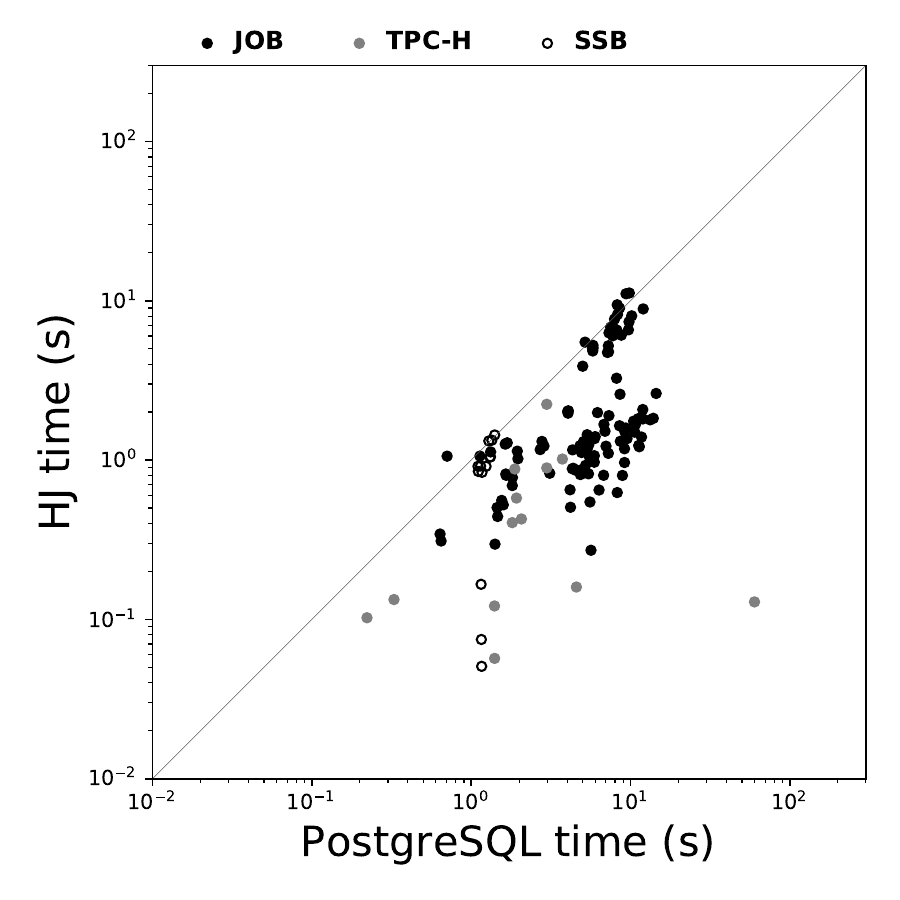}
		\caption{$\HJ$ v.s. PostgreSQL}
		\label{fig:bushy-postgres}
	\end{subfigure}%
	\begin{subfigure}[t]{.33\linewidth}
		\includegraphics[width=\linewidth]{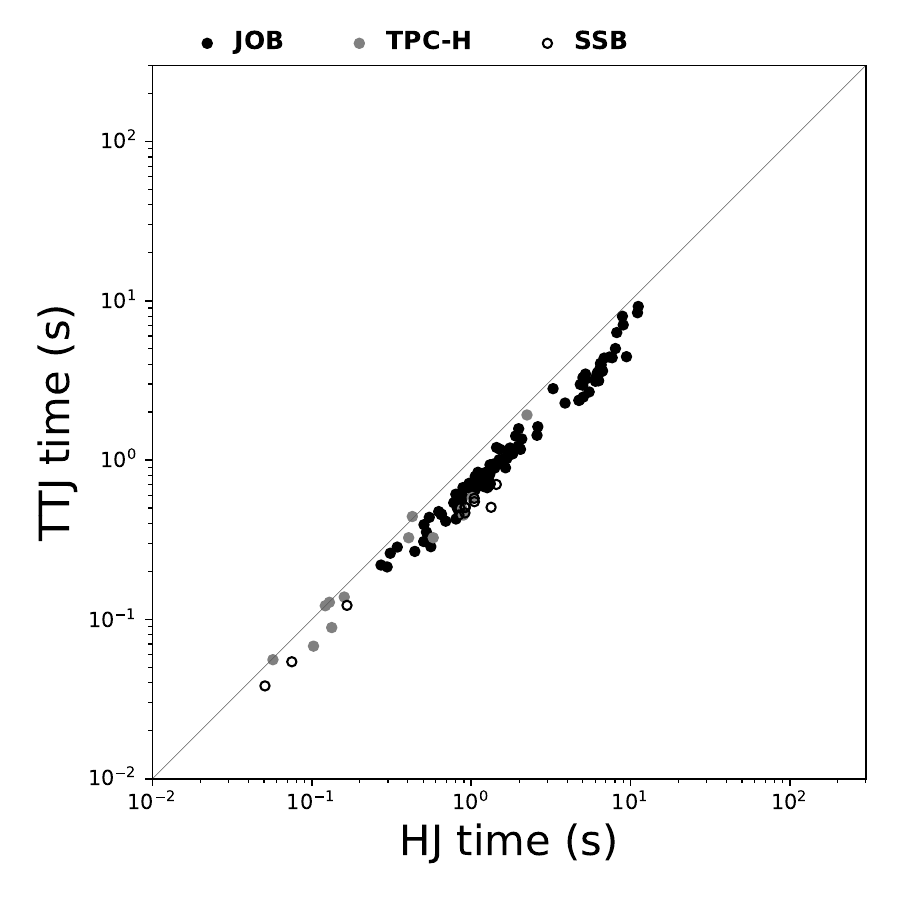}
		\caption{$\TTJ$ v.s. $\HJ$}
		\label{fig:bushy-hj}
	\end{subfigure}%
	\begin{subfigure}[t]{.33\linewidth}
		\includegraphics[width=\linewidth]{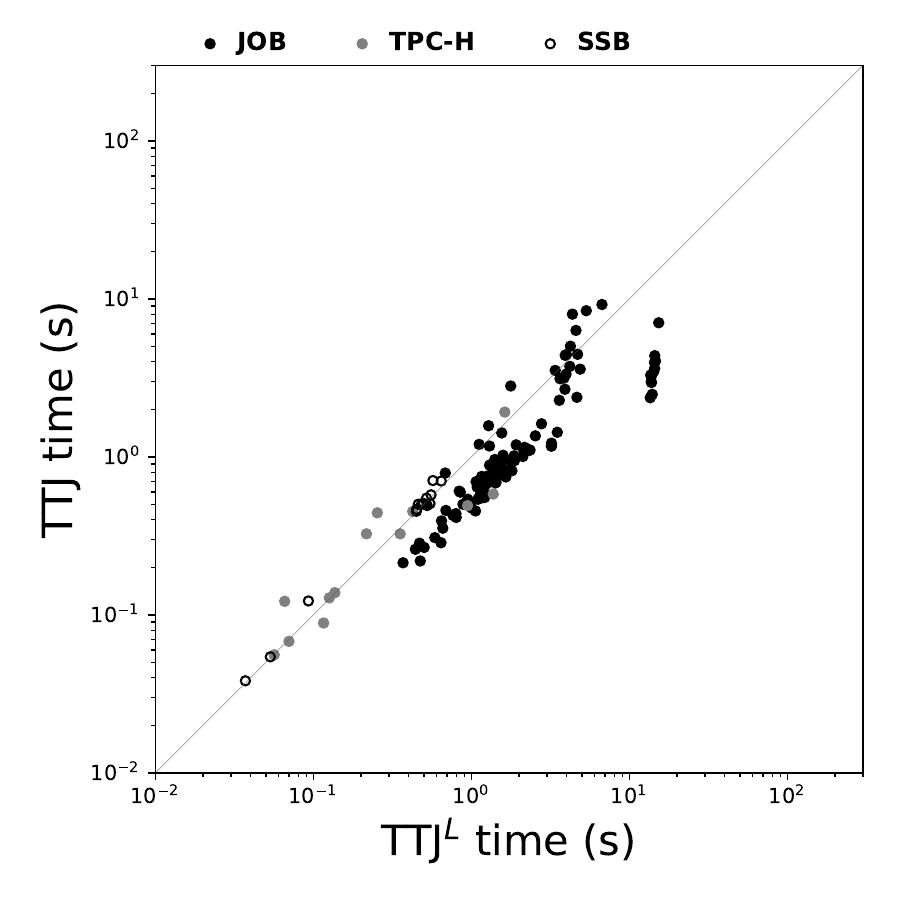}
		\caption{$\TTJ$ v.s. $\TTJLinear$}
		\label{fig:bushy-ttjl}
	\end{subfigure}
	\cprotect{\caption{%
	Run time comparison of $\TTJ$, $\TTJLinear$, $\HJ$ and PostgreSQL.
	$\TTJLinear$ uses left-deep linear plans generated by SQLite (same as earlier experiments),
	while all other algorithms use bushy plans generated by PostgreSQL.}}
	\label{fig:job-bushy}
\end{figure}

\nop{\begin{figure}
	\begin{adjustbox}{addcode={\begin{minipage}{\width}}{\caption{%
						speed-up of  $\TTJ$ (under PostgreSQL plans) and $\TTJLinear$ (under linear plans) over PostgreSQL on all 113 JOB queries using native PostgreSQL plans.\label{fig:job-bushy}}
  \rw{Can you rotate this back, but break into more rows to fit the page?}
					\end{minipage}},rotate=90,center}
		\includegraphics[width=\textheight]{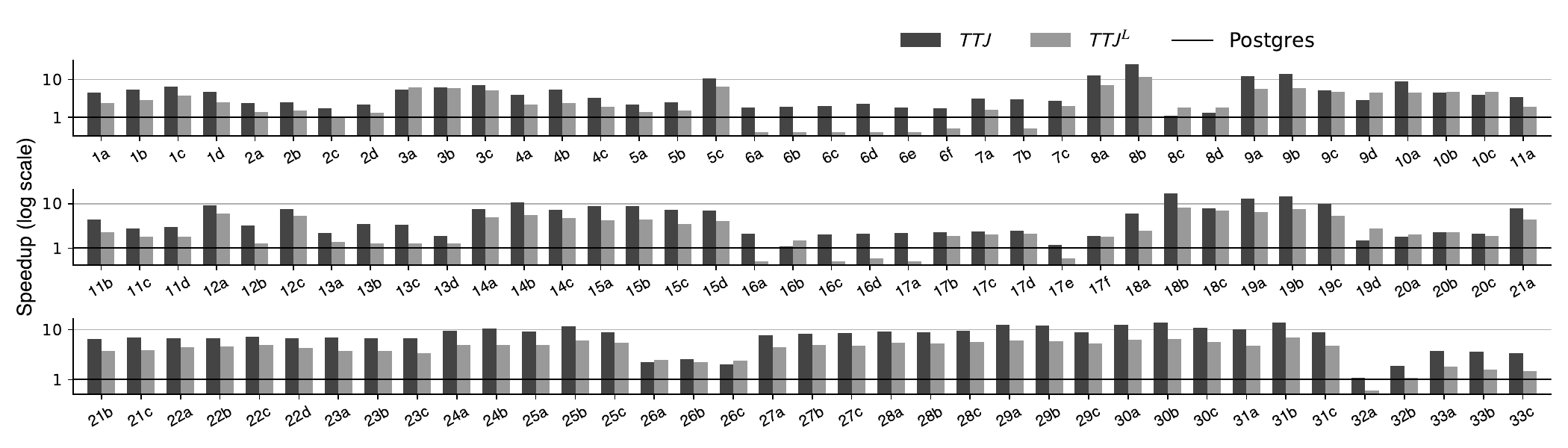}%
	\end{adjustbox}
\end{figure}}

Since every plan in \cref{sec:query-performance} is compatible with a GYO
reduction order, the runtime of \TTJ is guaranteed to be linear. However, given
the abundance of bushy plans (all the native PostgreSQL plans we used here are
bushy), a natural question to ask is whether \TTJ can still provide reasonably
good performance despite the loss of linear runtime guarantee. The results in
this section give an affirmative answer.

\cref{fig:bushy-postgres,fig:bushy-hj} compare the run time of $\TTJ$, $\HJ$,
and PostgreSQL using bushy plans produced by PostgreSQL,
and \cref{fig:bushy-ttjl} compares the run time of $\TTJ$ using bushy plans with 
the same algorithm using left-deep linear plans produced by SQLite, denoted by $\TTJLinear$.
\cref{fig:bushy-postgres} shows our \HJ baseline remains competitive with PostgreSQL under bushy plans.
$\TTJ$ is faster than $\HJ$ on all 113 JOB queries,
and faster than $\TTJLinear$ on 101 (89\%) of them.
Compared to $\TTJLinear$, the maximum
speed-up is $5.7\times$ (7b), the minimum speed-up is $0.5\times$ (19d), and the
average speed-up (geometric mean) is $1.75\times$.  Compared to \HJ, the maximum
speed-up is $2.1\times$ (13d), the minimum speed-up is $1.1\times$ (19d), and
the average speed-up (geometric mean) is $1.56\times$. 
From the figure, we observe that the materialization of intermediate results due
to bushy plans does not degrade \TTJ performance; in fact, \TTJ performs
much better than itself on linear plans in most cases due to the fact that
intermediate results generated in bushy plans are usually smaller than some of
the largest input relations in JOB, which allows \TTJ to spend less time building hash
tables. As a result, the saving in join computation becomes more salient than that
over left-deep plans. Furthermore, the performance improvement using \TTJ on each
individual linear plan has the compound effect that contribute to the overall
performance improvement of the queries.
However, we do observe that some of the queries still have better performance
under linear plans than bushy plans such as 8c and 16b. This indicates that
linear time guarantee is still meaningful for query performance and optimization
is still necessary to bring out the best performance of \TTJ (e.g., decide which
plan shape to use). 
An important topic for future work is to optimize bushy plans that also 
have the guarantee of optimality.
\section{Cyclic Queries}\label{sec:cyclic}
The \TTJ algorithm as defined in Figure~\ref{fig:ttj} supports both
acyclic and cyclic queries.
The guarantee to match or outperform binary join
(Theorem~\ref{thm:ttj-bj}) also holds for cyclic queries.
However, the linear-time guarantee only applies to acyclic queries.
To analyze the run time of \TTJ on cyclic queries,
we introduce a new method called {\em tree convolution} to 
break down a cyclic query in to acyclic parts. 
The next example illustrates the intuition behind tree convolution.
%
\begin{example}\label{ex:box}
Consider the following query whose query graph is shown in Figure~\ref{fig:box-query}:
\begin{align*}
Q_\boxtimes \cd & R_1(x_1, x_2) \bowtie R_2(x_2, x_3) \bowtie R_3(x_3, x_4) \bowtie R_4(x_4, x_1) \bowtie \\
& S_1(x_1, y) \bowtie S_2(x_2, y) \bowtie S_3(x_3, y) \bowtie S_4(x_4, y)
\end{align*}
In the query graph, each node is a variable,
and there is an edge between two nodes if the corresponding variables appear in the same atom,
for example the edge $R_1$ between $x_1$ and $x_2$ corresponds to the atom $R_1(x_1, x_2)$.
Clearly the query is cyclic.
However, we can compute the query by breaking it down into two acyclic steps:
First, we compute the join $S_1 \bowtie S_2 \bowtie S_3 \bowtie S_4$
and store the result in a temporary relation $S$.
Then we compute the final result with $R_1 \bowtie R_2 \bowtie R_3 \bowtie R_4 \bowtie S$.
Any acyclic join algorithm can be used to compute each step,
and, as we will show later, using \TTJ can avoid materializing the intermediate result $S$.
The total run time is therefore $O(|\IN| + |\OUT| + |S|)$.
\end{example}

\begin{figure}
\begin{subfigure}[t]{0.3\linewidth}
\centering
\begin{tikzpicture}[scale=2]

    \node[draw, circle, fill=black, inner sep=1.5pt, label=above left:$x_1$] (A) at (0,1) {};
    \node[draw, circle, fill=black, inner sep=1.5pt, label=above right:$x_2$] (B) at (1,1) {};
    \node[draw, circle, fill=black, inner sep=1.5pt, label=below right:$x_3$] (C) at (1,0) {};
    \node[draw, circle, fill=black, inner sep=1.5pt, label=below left:$x_4$] (D) at (0,0) {};
    \node[draw, circle, fill=black, inner sep=1.5pt, label=below:$y$] (E) at (0.5,0.5) {};
    
    \draw (A) -- (B) node[midway, above] {$R_1$};
    \draw (B) -- (C) node[midway, right] {$R_2$};
    \draw (C) -- (D) node[midway, below] {$R_3$};
    \draw (D) -- (A) node[midway, left] {$R_4$};
    
    \draw (E) -- (A) node[midway, above] {$S_1$};
    \draw (E) -- (B) node[midway, above] {$S_2$};
    \draw (E) -- (C) node[midway, below] {$S_3$};
    \draw (E) -- (D) node[midway, below] {$S_4$};
    
    \end{tikzpicture}
\caption{Quey graph of $Q_\boxtimes$.}
\label{fig:box-query}
\end{subfigure}%
\begin{subfigure}[t]{0.3\linewidth}
\centering
\includegraphics{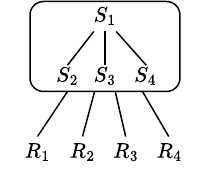}
\caption{A tree convolution of $Q_\boxtimes$.}
\label{fig:box-conv-1}
\end{subfigure}%
\begin{subfigure}[t]{0.4\linewidth}
\centering
\includegraphics{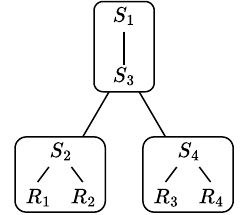}
\caption{Anothre tree convolution of $Q_\boxtimes$.}
\label{fig:box-conv-2}
\end{subfigure}
\caption{Query graph and tree convolutions of $Q_\boxtimes$ in Example~\ref{ex:box}.}
\label{fig:box-query-conv}
\end{figure}

To formally define tree convolutions, we first identify a (full conjunctive) query with
the set of atoms in its body,
and define a {\em subquery} of a query $Q$ as a subset of the atoms in $Q$.
Then, a tree convolution for a query is a nested tree, defined recursively as follows:
\begin{definition}\label{def:tree-conv}
A tree convolution of a query $Q$, written $\tconv(Q)$, is a tree where each node
is either an atom in $Q$, or a tree convolution of a subquery of $Q$.
Each atom in $Q$ appears exactly once anywhere in $\tconv(Q)$,
and every tree $T$ in $\tconv(Q)$ forms a join tree of the corresponding (sub-)query,
after replacing each non-atom node $v\in T$ with 
a fresh atom over all variables in $v$.
\end{definition}

\nop{The reason we replaced ``$B_t$ where $t \in V(T)$ admits a tree convolution of the subquery that represented by the atoms in $B_t$" with the current one is that the original sentence doesn't specify the structure of $B_t$; it simply says that each $B_t$ admits a tree convolution. If we use such definition, I think some tree decomposition can also be a tree convolution.}
	

Figure~\ref{fig:box-conv-1} shows a tree convolution corresponding to
the computation of $Q_\boxtimes$ in Example~\ref{ex:box}:
first join together the $S$ relations, then join the result with the $R$ relations.
Figure~\ref{fig:box-conv-2} shows a different convolution, where we first compute
three acyclic subqueries, then join together the results.
Another convolution with four nesting levels is shown in Figure~\ref{fig:bin-conv}.

We can compute any cyclic query using a tree convolution.
Starting from the most deeply nested trees, we run any acyclic join
algorithm to materialize intermediate results,
until we reach the top-level tree that produces the final output.
However, \TTJ can avoid the expensive materialization by using
a special kind of tree convolution called {\em rooted convolution}.
%
\begin{definition}\label{def:rooted-conv}
A convolution is {\em rooted} if nested convolutions only 
appear at the root of each tree.
\end{definition}


%
The convolution in Figure~\ref{fig:box-conv-1} is rooted,
while the one in Figure~\ref{fig:box-conv-2} is not.
We can generate a plan $p$ for \TTJ by traversing a rooted convolution
inside-out:
starting with the most deeply nested tree,
initialize $p$ with the reverse of the GYO-reduction order 
of this tree;
then, as we go up each level, append the reverse of the GYO-reduction order
to the end of $p$.
\begin{example}\label{ex:rooted-plan}
The rooted convolution in Figure~\ref{fig:box-conv-1}
generates the plan $[S_1, S_2, S_3, S_4, R_1, R_2, R_3, R_4]$.
\end{example}

A small adjustment to the \TTJ algorithm is necessary to fully exploit rooted convolutions.
If we execute \TTJ as-is using the plan in Example~\ref{ex:rooted-plan},
none of the relations $R_1, \ldots, R_4$ have a parent in the plan,
yet we need to compute the join $S \bowtie R_1 \bowtie \cdots \bowtie R_4$
in time $O(|S| + \sum_i |R_i| + |Q|)$, where $S$ is the join of $S_1, \ldots, S_4$.
We therefore introduce additional backjumps from each $R_i$ to $S_4$, but without
deleting any tuple from $S_4$.
This is achieved by defining the \lstinline|parent|
function to work over rooted convolutions:
given a tree convolution $C$ and a relation $R$,
if the parent node of $R$ in some tree of $C$
is an atom $P$, then assign $P$ as the parent of $R$;
otherwise if the parent node is a nested tree,
then assign the last relation in that tree (i.e. the first relation 
in its GYO-reduction order) as the parent of $R$.
For example, the parent of each $R_i$ in Example~\ref{ex:rooted-plan}
is $S_4$.



Finally, when backjumping to a parent in a nested convolution,
we do not delete any tuple from that parent.
We are now ready to analyze the run time of \TTJ on cyclic queries.


Given a rooted tree convolution $\tconv(Q)$, we generate a query plan $p$ as follows. Suppose $\tconv(Q)$ consists of $m$ nested trees $T_1, \dots, T_m$
where $T_1$ is the most deeply nested tree and $T_m$ is the outermost tree.
Let $p_1, \dots, p_m$ be the plans corresponding to $T_1, \dots, T_m$.
Then, each $p_i$ is a plan that corresponds to the reverse of a GYO-reduction order of $T_i$, where the first relation in $p_i$ is result of $p_{i-1}$ for $i>1$. The outermost plan $p_m$ then computes the final result of $Q$.

\begin{proposition}{\label{prop:cyclic-backjumping}}
	During $\ttj$ execution on a given rooted convolution $\tconv$ of a query $Q$, if a lookup fails at $R$ that belongs to $p_i$ but not in $p_{i-1}$ ($i \ge 2$), $\ttj$ either backjumps to an atom that is in $p_i$ but not in $p_{i-1}$ or backjumps to the last atom of $p_{i-1}$.
\end{proposition}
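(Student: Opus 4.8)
The plan is to trace where a backjump can land, using the modified \lstinline|parent| function together with the structure of a rooted convolution. Fix the rooted convolution $\tconv$, the generated plan $p = p_1 \concat p_2 \concat \cdots \concat p_m$, and an index $i \ge 2$. Let $R$ be a relation in $p_i$ but not in $p_{i-1}$; i.e., $R$ is an atom of the tree $T_i$ (not the ``placeholder'' first relation of $p_i$, which is the result of $p_{i-1}$). Suppose the lookup at $R$ fails during \lstinline|ttj| execution. By the algorithm (line~\ref{lst:backjump}), the backjump target is exactly the value returned by \lstinline|parent(plan[1..k], R)| where $k$ is $R$'s position in $p$. The first step is to argue this equals the convolution-aware parent of $R$ as defined just before the proposition: \lstinline|parent| walks the prefix \lstinline|plan[1..k]| looking for an atom whose schema contains $\keys$ of $R$, and because $p$ is the reverse of a GYO order of $T_i$ layered on top of $p_1, \dots, p_{i-1}$, every variable of $R$ shared with earlier atoms is already covered within $T_i$ or by the placeholder node standing for $p_{i-1}$'s output.

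Next I would do the case split dictated by $T_i$'s structure. Since $T_i$ is a join tree of the (sub-)query obtained by replacing the nested subtree with a fresh atom $A$ over all of $p_{i-1}$'s output variables, $R$ has a unique parent node in $T_i$. Case 1: that parent node is an ordinary atom $P$ of $T_i$. Then $P \in p_i$, and $P$ is not the placeholder, so $P$ is in $p_i$ but not in $p_{i-1}$ — the first alternative in the statement. Here I would invoke Lemma~\ref{lem:gyo} (applied within $T_i$): because $p_i$ restricted to $T_i$'s atoms is the reverse of a GYO order, $R$ is a non-root atom of $T_i$ and hence genuinely has such a parent $P$ appearing before $R$ in $p_i$, so \lstinline|parent| does not return \lstinline|None| and does not skip past $T_i$. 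Case 2: the parent node of $R$ in $T_i$ is the nested subtree (equivalently the placeholder atom $A$). By the extended definition of \lstinline|parent|, the assigned parent is the last relation of $p_{i-1}$, i.e. the first relation in the GYO-reduction order of $T_{i-1}$ — the second alternative in the statement.

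The one subtlety to nail down is that \lstinline|parent(plan[1..k], R)| cannot return some atom living strictly inside an \emph{earlier} layer $p_j$ with $j < i-1$, nor strictly inside $p_{i-1}$ other than its last atom. This is where I expect the main obstacle to be, and it is handled by the key-schema / connectedness property of join trees: any variable $R$ shares with an atom in $p_{j}$ ($j \le i-1$) must, by the connected-subtree condition of the join tree containing that nesting, also appear in the placeholder atom $A$ for $p_{i-1}$'s output (all such ``interface'' variables are funneled through the nested-tree node). Hence $\keys(\text{plan}[1..k], R) \subseteq \Sigma(A)$ and, crucially, the last atom of $p_{i-1}$ — which is the root of $T_{i-1}$ and whose schema is a superset of all output variables $A$ exposes — is a valid parent candidate; since \lstinline|parent| returns the first valid candidate in plan order and all candidates inside $T_i$ precede it, the returned atom is either in $T_i$ (Case 1) or exactly that last atom of $p_{i-1}$ (Case 2). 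Combining the two cases gives the proposition.
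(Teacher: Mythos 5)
Your proposal is correct and follows essentially the same route as the paper's proof: the paper likewise observes that every relation in $p_i$ but not in $p_{i-1}$ has a parent because $p_i$ reverses a GYO-reduction order of $T_i$, and then reads off the two possible backjump targets directly from the definition of the modified \lstinline|parent| function (an atom of $T_i$, or the last relation of the nested tree). The only difference is that you additionally verify that a prefix-scanning implementation of \lstinline|parent| cannot land strictly inside an earlier layer, a point the paper leaves implicit by defining the modified \lstinline|parent| directly over the convolution rather than over the plan prefix.
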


\begin{proof}
	For any $p_i$ with $i \ge 2$, since $p_i$ is the reverse of a GYO-reduction order of the $i$-th tree of $\tconv$ and by the modified \lstinline|parent| function, every relation in $p_i$ but not in $p_{i-1}$ has parent. Furthermore, if lookup fails at $R$ that is in $p_i$ but not in $p_{i-1}$, \lstinline|ttj| backjumps to $R$'s parent. Then, the result follows by the definition of the modified \lstinline|parent| function.
\end{proof}

In $p_1$, if the parent of a relation is the first relation of $p_1$, \lstinline|ttj| backjumps to the first relation. We can treat the first relation of $p_1$ as $p_0$ (i.e., the most deeply nested tree in $\tconv$ is now a node of an atom of $Q$) and the relation is also the last atom of $p_0$. Therefore, we can remove the restriction of $i \ge 2$ in \cref{prop:cyclic-backjumping}. In the following proof, we reference the \cref{prop:cyclic-backjumping} with the understanding that it holds for $i \ge 1$.

\begin{theorem}[restate=ttjcyclic,name=,]\label{thm:ttj-cyclic}
	Given a rooted convolution $\tconv(Q)$ of $Q$,
	there is a plan $p$ such that
	\TTJ runs in time $O(|\IN| + |\OUT| + \sum_i |S_i|)$ on $p$
	where $|S_i|$ is the size of the join of all relations in the $i$-th tree of $\tconv(Q)$.
\end{theorem}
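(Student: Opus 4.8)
The plan is to take $p$ to be the inside-out traversal of $\tconv(Q)$ from \cref{ex:rooted-plan} --- the concatenation of the reversed GYO-reduction orders $p_1,\dots,p_m$ of the nested trees, with the virtual first entry of each $p_i$ ($i\ge2$) identified with the last atom $L_{i-1}$ of $p_{i-1}$, so that nothing is materialized --- and then to mimic the proof of \cref{thm:linear}: since each call to $\ttj$ does constant work, it suffices to bound the number of calls. (Correctness carries over from the acyclic case once one checks that when a lookup at an atom $R$ of level $i$ fails and its modified parent is the nested parent $L_{i-1}$, the failing key lies in the variable set of $S_{i-1}$ --- the schema of the fresh root atom of $T_i$ --- and hence is fixed by the bindings made on or before $L_{i-1}$; so backjumping to $L_{i-1}$ skips no output tuple, but since that key may also depend on atoms bound strictly before $L_{i-1}$, $L_{i-1}$'s tuple is correctly left undeleted. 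Actual deletions occur only on within-level backjumps, where \cref{lem:delete} applies unchanged.)

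I would view the recursion of $\ttj$ as a tree and count its leaves --- the calls that recurse no further. A leaf is either (i) a call with $i>|p|$, which prints one output tuple, so there are at most $|\OUT|$ of them by \cref{lem:distinct}; or (ii) a call whose lookup fails and returns its parent. By \cref{prop:cyclic-backjumping} (extended to $i\ge1$ via the $p_0$ convention), a type-(ii) backjump is caught either inside its own level $p_i$ --- where it deletes a tuple from the catching relation, giving at most $\sum_i|R_i|=|\IN|$ such leaves in total --- or at the last atom $L_j$ of some earlier level, with no deletion. For the latter kind, each such backjump is caught in a distinct iteration of a loop over $L_j[k]$, so their number is at most the total number of iterations of all $L_j$-loops; by \cref{lem:distinct} that equals the number of distinct partial tuples passed to the atom of $p$ right after $L_j$, and such a tuple ranges over the variables of all atoms of levels $\le j$ --- i.e.\ the schema of $S_j$ --- and belongs to their join, so there are at most $|S_j|$ of them. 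Summing over $j$ bounds these leaves by $\sum_i|S_i|$.

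Hence the recursion tree has $O(|\IN|+|\OUT|+\sum_i|S_i|)$ leaves, and since $|p|$ is constant the total number of $\ttj$ calls is at most a constant times this; adding the $O(|\IN|)$ cost of building the hash tables (no hash table is ever built for an intermediate $S_i$), the running time is $O(|\IN|+|\OUT|+\sum_i|S_i|)$.

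I expect the crux to be the second sub-case of leaf type (ii): isolating the new species of backjump --- the one to a nested parent, which does not delete a tuple --- and charging it to the at most $|S_j|$ partial tuples that cross from level $j$ into level $j+1$. This is exactly where \cref{prop:cyclic-backjumping} and the \emph{rooted} shape of the convolution matter: in a non-rooted convolution a backjump could target an arbitrarily deeply nested node, the partial tuple at the catch point would not correspond to any single $S_j$, and the charge would not go through.
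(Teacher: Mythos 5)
Your proposal is correct and follows essentially the same route as the paper: generate $p$ by the inside-out traversal, reduce to counting \lstinline|ttj| calls as in \cref{thm:linear}, classify each terminal call as an output ($O(|\OUT|)$), a within-level backjump that deletes a tuple ($O(|\IN|)$), or a backjump to the nested parent $L_{i-1}$ charged to the at most $|S_{i-1}|$ partial tuples crossing into level $i$ via \cref{prop:cyclic-backjumping}. Your accounting of the no-deletion backjumps (one per iteration of the $L_j$-loop, bounded via \cref{lem:distinct}) just makes explicit what the paper compresses into ``\TTJ works no different than binary join from this moment until the next lookup failure.''
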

\begin{proof}
	Since there is no change to \lstinline|ttj| except using the modified
	\lstinline|parent| function for $\tconv$, like Proof of \cref{thm:linear}, we only
	need to show the algorithm makes $O(|\IN| + |\OUT| + \sum_i |S_i|)$ number of calls to
	\lstinline|ttj|.
	
	Excep for the initial call to \lstinline|ttj| with \lstinline|ttj((), p, 1)|, every call to \lstinline|ttj| has three possible outcomes: 
	\begin{enumerate*}
		\item It outputs a tuple.\label{enum:output-a}
		\item It backjumps and possibly deletes a tuple from an input relation.\label{enum:backjump-a}
		\item It recursively calls \lstinline|ttj|.\label{enum:recursive-a}
	\end{enumerate*}
	Because the query plan has constant length, 
	there can be at most a constant number of recursive calls
	to \lstinline|ttj| (case~\ref{enum:recursive-a}) until we reach cases~\ref{enum:output-a} or~\ref{enum:backjump-a}.
	There can be $O(|Q|)$ \lstinline|ttj| calls for case~\ref{enum:output-a}. 
	Since lookup cannot fail at the first relation of $p$, the relation that lookup fails at is in some $p_i$ but not in $p_{i-1}$. Let $R$ be a relation that a lookup fails at. By \cref{prop:cyclic-backjumping}, there can be two cases on where \lstinline|ttj| backjumps to. If \lstinline|ttj| backjumps to an atom that is also in $p_i$ but not in $p_{i-1}$, a tuple is deleted. This case can happen $O(|\IN|)$ times. If \lstinline|ttj| backjumps to the last atom of $p_{i-1}$, since \lstinline|ttj| works no different than binary join from this moment until next lookup failure, this can happen $O(|S_{i-1}|)$ times; the tuples computed at the last atom of $p_{i-1}$ is $S_{i-1}$.
	Therefore, given $1 \le i \le m$, there are at most $O(|\IN| + |\OUT| + \sum_{i=1}^{m-1} |S_i|)$ calls to \lstinline|ttj|, and the algorithm runs in that time.
\end{proof}

\nop{\begin{theorem}[restate=ttjcyclic,name=,]\label{thm:ttj-cyclic}
Given a rooted convolution $\tconv(Q)$ of $Q$,
there is a plan $p$ such that
\TTJ runs in time $O(|\IN| + |\OUT| + \sum_i |S_i|)$ on $p$
where $|S_i|$ is the size of the join of all relations in the $i$-th tree of $\tconv(Q)$.
\end{theorem}
%
In other words, \TTJ runs in time linear to the input and output size, 
plus the total size of intermediate results.
We sketch the proof of this claim as follows. The full proof is in \cref{sec:full-thm-ttj-cyclic}.
\begin{proof}[Proof sketch]
Consider the prefix $p'$ of length $l$ of the plan $p$ generated by the rooted convolution $\tconv(Q)$,
corresponding to the most deeply nested tree.
During execution, the algorithm makes exactly the same set of calls to \lstinline|ttj(t, p, i)|
when $i \leq l$ as if it were computing the subquery corresponding to $p'$.
This is because any backjumps to $p'$ from a latter part of the plan never deletes any tuple, 
and the algorithm simply continues to produce the next result.
The absence of tuple deletion also does not affect the run time of the next convolution level,
as deleting from the root relation is a no-op as we have noted in Example~\ref{ex:ttj-unroll}.
The same argument applies to any nesting level in $\tconv(Q)$,
so \TTJ runs in linear time for every tree in $\tconv(Q)$.
\end{proof}}

We conclude this section by noting that tree convolution generalizes several exisiting ideas in
databases and constraint satisfaction.
First, classic binary join plans are a special case, where every tree in the tree convolution
is of size 2.
For example, the convolution in Figure~\ref{fig:bin-conv} corresponds to the
binary join plan in Figure~\ref{fig:bin-plan}.
In other words, traditional hash join can be thought of as computing a convolution
one binary join at a time, 
and we have generalized this to computing a multi-way acyclic join at a time.
In a similar way, rooted convolutions generalize left-deep linear plans,
as the top-half of Figure~\ref{fig:bin-conv} corresponds to the left-half in Figure~\ref{fig:bin-plan}.
Second, in constraint satisfaction a {\em cycle cut set} is a subset of
the constraints whose exclusion makes the constraint problem acyclic.
In database terms, it is a subset of the atoms of $Q$ whose removal leaves
an acyclic subquery.
Tree convolution generalizes cycle cut sets in the sense that it 
``cuts'' the query in multiple rounds, with each round producing
acyclic subqueries that can be computed in linear time.
Finally, many of the properties of tree convolution are shared with tree decomposition of hypergraphs~\cite{Gottlob2016}
in database theory. In fact both tree convolutions in \cref{fig:box-query-conv} can also
be seen as tree decompositions: \cref{fig:box-conv-1} is a tree decomposition with 5 bags, one
containing $\{S_1, S_2, S_3, S_4\}$, and one for each $R_i$; 
each box in \cref{fig:box-conv-2} forms a bag
in the corresponding tree decomposition. These are then also examples of where algorithms using tree
decompositions require materializing the join result of each bag. But given a rooted tree convolution, \cref{fig:box-conv-1}, $\TTJ$
requires no materialization.

\begin{figure}
\begin{subfigure}[t]{0.5\linewidth}
\centering
\includegraphics{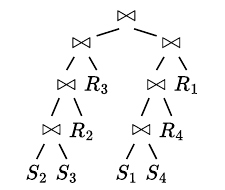}
\caption{A binary join plan for $Q_\boxtimes$.}
\label{fig:bin-plan}
\end{subfigure}%
\begin{subfigure}[t]{0.5\linewidth}
\centering
\includegraphics{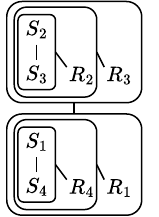}
\caption{A tree convolution for $Q_\boxtimes$.}
\label{fig:bin-conv}
\end{subfigure}
\caption{A binary join plan for the query in Example~\ref{ex:box} and the corresponding tree convolution.}
\label{fig:bin-plan-conv}
\end{figure}
\section{Future Work and Conclusion}\label{sec:conclusion}
In this paper we have proposed our new join algorithm, TreeTracker Join (\TTJ).
The algorithm runs in time $O(|\IN| + |\OUT|)$ on acyclic queries,
and guarantees to make no more hash probes than binary hash join 
on the same query plan.
We have shown empirically that \TTJ is competitive with binary hash 
join and Yannakakis's algorithm.

Although our implementation already beats PostgreSQL in our experiments, 
challenges remain for \TTJ to compete with highly optimized systems.
Decades of research on binary join has produced effective techniques 
like column-oriented storage, vectorized execution, and parallel execution, 
just to name a few.
Future research should investigate how to adapt these techniques to \TTJ.

Another avenue for future work is to develop a dedicated query optimizer for \TTJ.
As this paper's focus is on algorithm-level comparison, 
we have opted to reuse existing systems to produce binary hash join plans,
which are then executed using \TTJ.
Tailoring the optimizer to \TTJ may yield plans with better performance.
For instance, estimating the number of hash probe failures instead of
intermediate result sizes shall more accurately model the execution cost of \TTJ.
On the other hand, extending \TTJ to also guarantee a linear time complexity
on bushy plans is also an interesting challenge.
Our theoretical analyses of \TTJ reveal a close connection between
GYO-reduction orders and left-deep linear plans, both of which are total orders.
Since bushy plans and join trees both define partial orders,
we conjecture there exists a algorithm that runs in linear time on any bushy plan
that corresponds to a join tree,
with the same guarantee of matching binary hash join on the same plan.

Finally, our experiments focus on acyclic queries due to their prevalence 
in traditional workloads.
However, with the rise of graph databases practitioners begin to encounter more and more 
cyclic queries.
Additional research on \TTJ for cyclic queries, 
both in terms of practical performance and theoretical guarantees,
will be very valuable.
Some open problems include:
Given any hypergraph, what is the minimum nesting depth of any tree convolution?
How does this number related to other measures like various notions of hypergraph widths?
And what is the complexity of finding the optimal tree convolution given a cost function?
Answering theses questions will aid the development of a query optimizer for \TTJ
on cyclic queries.


\bibliographystyle{ACM-Reference-Format}
\bibliography{references}

\nop{\appendix
\label{sec:appendix}

\section{Proof of Theorem~\ref{thm:ttj-cyclic}}\label{sec:full-thm-ttj-cyclic}

\rw{The proof is actually not very complicated, so please move it to the main text.}

Given a rooted tree convolution $\tconv(Q)$, we generate a query plan $p$ as follows.
Suppose $\tconv(Q)$ consists of $m$ nested trees $T_1, \dots, T_m$
where $T_1$ is the most deeply nested tree and $T_m$ is the outermost tree.
Let $p_1, \dots, p_m$ be the plans corresponding to $T_1, \dots, T_m$.
Then, each $p_i$ is a plan that corresponds to the reverse of a GYO-reduction order of $T_i$,
where the first relation in $p_i$ is result of $p_{i-1}$ for $i>1$.
The outermost plan $p_m$ then computes the final result of $Q$.

\rw{Change $C$ to $\tconv$.}

\nop{Any $p_i$ for $2 \le i \le m$ contains $p_{i-1}$ with extra atoms in $Q$ that are in $p_i$ but not in $p_{i-1}$.}

\nop{\begin{proposition}{\label{prop:cyclic-backjumping}}
\nop{During $\ttj$ execution on a given rooted convolution $\tconv$ of a query $Q$, $\ttj$ either backjumps to an atom in $Q$ and deletes a tuple from the atom or backjumps to the last relation of the previous nesting level.}
During $\ttj$ execution on a given rooted convolution $\tconv$ of a query $Q$, if a lookup fails at $R$ that belongs to $p_i$, $\ttj$ either backjumps to an atom in $p_i$ and deletes a tuple from the atom (except the root relation) or backjumps to the last atom of $p_{i-1}$ if $i \ge 2$.
\nop{closest relation to $R$ in $p_{i-1}$ if $i \ge 2$.}
\end{proposition}

\begin{proof}
Since $p_1$ is the reverse of a GYO-reduction order of the most deeply nested tree in $\tconv$, by \cref{lem:gyo}, for any $R$ except the root relation, \lstinline|ttj| backjumps to $R$'s parent. If $R$'s parent is non-root relation, a tuple is deleted. For any $p_i$ with $i \ge 2$, since $p_i$ is the reverse of a GYO-reduction order of the $i$-th tree of $\tconv$ and by the modified \lstinline|parent| function, every relation in $p_i$ but not $p_{i-1}$ has parent. 
\end{proof}}

\begin{proposition}{\label{prop:cyclic-backjumping}}
During $\ttj$ execution on a given rooted convolution $\tconv$ of a query $Q$, if a lookup fails at $R$ that belongs to $p_i$ but not in $p_{i-1}$ ($i \ge 2$), $\ttj$ either backjumps to an atom that is in $p_i$ but not in $p_{i-1}$ or backjumps to the last atom of $p_{i-1}$.
\end{proposition}

\begin{proof}
For any $p_i$ with $i \ge 2$, since $p_i$ is the reverse of a GYO-reduction order of the $i$-th tree of $\tconv$ and by the modified \lstinline|parent| function, every relation in $p_i$ but not in $p_{i-1}$ has parent. Furthermore, if lookup fails at $R$ that is in $p_i$ but not in $p_{i-1}$, \lstinline|ttj| backjumps to $R$'s parent. Then, the result follows by the definition of the modified \lstinline|parent| function.
\end{proof}

In $p_1$, if the parent of a relation is the first relation of $p_1$, \lstinline|ttj| backjumps to the first relation. We can treat the first relation of $p_1$ as $p_0$ (i.e., the most deeply nested tree in $\tconv$ is now a node of an atom of $Q$) and the relation is also the last atom of $p_0$. Therefore, we can remove the restriction of $i \ge 2$ in \cref{prop:cyclic-backjumping}. In the following proof, we reference the \cref{prop:cyclic-backjumping} with the understanding that it holds for $i \ge 1$.

\ttjcyclic*
\begin{proof}
Since there is no change to \lstinline|ttj| except using the modified
\lstinline|parent| function for $\tconv$, like Proof of \cref{thm:linear}, we only
need to show the algorithm makes $O(|\IN| + |\OUT| + \sum_i |S_i|)$ number of calls to
\lstinline|ttj|.

Excep for the initial call to \lstinline|ttj| with \lstinline|ttj((), p, 1)|, every call to \lstinline|ttj| has three possible outcomes: 
\begin{enumerate*}
	\item It outputs a tuple.\label{enum:output-a}
	\item It backjumps and possibly deletes a tuple from an input relation.\label{enum:backjump-a}
	\item It recursively calls \lstinline|ttj|.\label{enum:recursive-a}
\end{enumerate*}
Because the query plan has constant length, 
there can be at most a constant number of recursive calls
to \lstinline|ttj| (case~\ref{enum:recursive-a}) until we reach cases~\ref{enum:output-a} or~\ref{enum:backjump-a}.
There can be $O(|Q|)$ \lstinline|ttj| calls for case~\ref{enum:output-a}. 
Since lookup cannot fail at the first relation of $p$, the relation that lookup fails at is in some $p_i$ but not in $p_{i-1}$. Let $R$ be a relation that a lookup fails at. By \cref{prop:cyclic-backjumping}, there can be two cases on where \lstinline|ttj| backjumps to. If \lstinline|ttj| backjumps to an atom that is also in $p_i$ but not in $p_{i-1}$, a tuple is deleted. This case can happen $O(|\IN|)$ times. If \lstinline|ttj| backjumps to the last atom of $p_{i-1}$, since \lstinline|ttj| works no different than binary join from this moment until next lookup failure, this can happen $O(|S_{i-1}|)$ times; the tuples computed at the last atom of $p_{i-1}$ is $S_{i-1}$.
Therefore, given $1 \le i \le m$, there are at most $O(|\IN| + |\OUT| + \sum_{i=1}^{m-1} |S_i|)$ calls to \lstinline|ttj|, and the algorithm runs in that time.
\end{proof}

\nop{
We prove the number of calls to \lstinline|ttj| by induction on $i$. Base case $i = 1$.}}

\end{document}